\newcommand{\Zset}[3]{\{ #1 \ifx \@empty#2\else | #2 \fi \ifx \@empty#3\else \spot #3 \fi \}}
\renewcommand{\spot}{\mathrel{\cdot}}
\newcommand{\eseq}{\seqT{}}
\newcommand{\popempty}{\mathsf{none}}
\newcommand{\ceval}[2]{\Meaning{#1}_{#2}}
\newcommand{\notr}{\overline{r}}
\newcommand{\seqex}[1]{[\ifx \@empty#1\@empty ~ \else #1 \fi]}
\newcommand{\seqT}[1]{\seqex{#1}}
\newcommand{\Trt}[2]{(#1,#2)}
\newcommand{\Trf}[2]{\Trt{#1}{\seqex{#2}}}
\newcommand{\Tra}[2]{(#1 #2)}
\newcommand{\emptyclosure}[1]{empty\_closure(#1)}
\newcommand{\prefixclosure}[1]{prefix\_closure(#1)}
\newcommand{\abortclosure}[1]{abort\_closure(#1)}
\newcommand{\STest}[1]{[#1]}
\newcommand{\botvalue}{\bot_{v}}
\newcommand{\botstate}{\bot_{s}}
\newcommand{\arelidle}[1]{\atomicrel{#1}_{idle}}
\newcommand{\arelidler}{\arelidle{r}}
\mathchardef\mhyphen="2D
\renewcommand{\Env}{\mathop{\Keyword{\epsilon \mhyphen rely}}}
\newcommand{\envc}[1]{(\Env #1)}
\newcommand{\envcr}{\envc{r}}
\newcommand{\eguard}[1]{(\mathop{\Keyword{\epsilon \mhyphen restrict}} #1)}
\newcommand{\pguard}[1]{(\mathop{\Keyword{\pi \mhyphen restrict}} #1)}
\newcommand{\eguardr}{\eguard{r}}
\newcommand{\pguardg}{\pguard{g}}
\newcommand{\optr}[1]{opt(#1)}
\newcommand{\assertp}{\Pre{p}}
\newcommand{\quint}[5]{\{#1, #2\} #5 \{#3, #4\}}
\newcommand{\quintprgq}[1]{\quint{p}{r}{g}{q}{#1}}
\newcommand{\quintprgqc}{\quintprgq{c}}
\renewcommand{\Post}[1]{\Sspec{}{}{#1}}
\newcommand{\SPost}[1]{\Post{#1}}
\newcommand{\SPostq}{\SPost{q}}
\newcommand{\myparagraph}[1]{\paragraph{#1.}}
\newcommand{\refsect}[1]{Sect.~\ref{s:#1}}
\newcommand{\pl}{\parallel}
\newcommand{\asgn}{\mathop{:\!=}}
\newcommand{\Meaning}[1]{[\hskip -2pt [#1 ]\hskip -2pt ]}
\newcommand{\refeqn}[1]{(\ref{#1})}
\newcommand{\barx}{\overline{x}}
\newcommand{\idbarx}{\id(\barx)}
\newcommand{\choice}{\mathrel{\sqcap}}
\newcommand{\Choice}[3]{\overset{#1}{\underset{#2}{\Nondet}} #3}
\newcommand{\Command}{\mathit{Com}}
\newcommand\reffig[1]{Figure~\ref{fig:#1}}
\newcommand\IF[3]{\If #1 \Then #2 \Else #3}
\newcommand{\With}{\mathop{\Keyword{with}}}
\newcommand{\Resource}{\mathop{\Keyword{resource}}}
\newcommand{\Local}[2]{\mathop{\Keyword{local}} #1 \spot #2}
\newcommand{\cnext}{\boldsymbol{\atnext}}
\newcommand{\calways}{\boldsymbol{\always}}
\newcommand{\ceventually}{\boldsymbol{\eventually}}
\renewcommand{\Nil}{\cgdd}
\renewcommand{\Magic}{\boldsymbol{\top}}
\renewcommand{\Abort}{\boldsymbol{\bot_{\pi}}}
\newcommand{\EAbort}{\boldsymbol{\bot_{\epsilon}}}
\newcommand{\Fair}{\Keyword{fair}}
\newcommand{\Fairterm}{\Keyword{fairterm}}
\newtheorem{theorem}{Theorem}
\newtheorem{lemma}[theorem]{Lemma}
\renewenvironment{proof}{\par\noindent\textbf{Proof.}}{\noindent$\Box$}
\newcommand{\encode}[1]{[\![#1]\!]}
\newcommand{\strictconjunction}{weak conjunction}
\newcommand{\Strictconjunction}{Weak conjunction}
\edef\today{\number\day\ \ifcase\month\or
  January\or February\or March\or April\or May\or June\or
  July\or August\or September\or October\or November\or December\fi
  \ \number\year}
\newcounter{Hours}
\newcounter{Minutes}
\newcommand{\CurrentTime}{%
 \ifthenelse{\value{Hours}<10}{0}{}\theHours:%
 \ifthenelse{\value{Minutes}<10}{0}{}\theMinutes}
\title[Designing a wide-spectrum semantics \draftonly{(\today)}]
{%
Designing a semantic model for a \\
wide-spectrum language with concurrency
}
\author[R. Colvin, I. J. Hayes and L. Meinicke \draftonly{(\today)}]
{
Robert J. Colvin \and
Ian J. Hayes \and
Larissa A. Meinicke\\
School of Information Technology and Electrical Engineering,\\ 
The University of Queensland, Australia}
\date{\today\ \CurrentTime}
\begin{document}

\maketitle

\begin{abstract}
A wide-spectrum language integrates specification constructs into a programming language
in a manner that treats a specification command just like any other command.
This paper investigates a semantic model for a wide-spectrum language that supports concurrency
and a refinement calculus.
In order to handle specifications with rely and guarantee conditions,
the model includes explicit environment steps as well as program steps.
A novelty of our approach is that we define a set of primitive commands and operators,
from which more complex specification and programming language commands are built.
The primitives have simple algebraic properties which support proof using algebraic reasoning.
The model is general enough to specify notions as diverse as
rely-guarantee reasoning, temporal logic, and progress
properties of programs, and supports 
refining specifications to code.
It also forms an instance of an abstract concurrent program algebra, 
which facilitates reasoning about properties of
the model at a high level of abstraction.
\end{abstract}

\begin{keywords}
Refinement calculus; 
wide-spectrum language;
concurrency;
program algebra;
rely-guarantee
\end{keywords}

\section{Introduction and motivation}
\label{s:intro}

Concurrent programming languages and their semantics are relatively well understood
but the problem of how to abstractly specify concurrent programs remains a challenge.
For a sequential program one way to abstractly specify its behaviour is via 
a precondition-postcondition pair \cite{Floyd67,Hoare69a}.
For concurrent programs, attending only to initial and final states
is inadequate:
one needs to express 
the intermediate behaviour of a program and its interaction with its environment
along with any assumptions it makes about its environment.

\subsection{Goals}

The goal of this research is to provide a semantic model for a
\emph{concurrent wide-spectrum language} that includes both
specification constructs and standard programming combinators and
primitives.  The benefit lies in expressing related concepts from
concurrency in a uniform manner, and thus supporting reuse of
theories.

The model supports a \emph{refinement calculus} \cite{BackWright98}
with a refinement relation, $c_1 \refsto c_2$, representing that
specification $c_1$ is refined (or implemented) by $c_2$.
The refinement calculus approach is compositional in the sense that
refining a subcomponent refines the whole, for example, if $c_1
\refsto c_2$ and $d_1 \refsto d_2$ then the parallel composition of
$c_1$ and $d_1$, $c_1 \pl d_1$, is refined by $c_2 \pl
d_2$.\footnote{In other words, the operators of the calculus are
  \emph{monotonic} in their arguments with respect to the refinement
  ordering.}
Extending the refinement calculus to handle concurrency in a
compositional manner requires a rich specification language that can
state the desired behaviour of a program with respect to assumptions
about the behaviour of its environment.  The underlying model of a
command is based on traces that distinguish program and environment
steps originally suggested by Aczel~\cite{Aczel83} and further
developed by de Roever and
others~\cite{BoerHannemanDeRoever99,DeRoever01}.  The approach uses a
small set of primitive commands and basic operators to define more
complex commands.

The original motivation for this work was to provide a general and
abstract theory to support the \emph{rely-guarantee} approach for
reasoning about concurrent programs of Jones \cite{Jones81d,Jones83b}
but the resulting theory is applicable to other concepts including
atomic operations and
temporal logic specifications \cite{Pnueli77}.
The model is an instance of a concurrent program algebra~\cite{AFfGRGRACP-TRX},
and so it can be used to prove consistency of the axioms given
there. The algebra can then be used to verify properties of the model
at a higher level of abstraction. The algebra of Hayes~\cite{AFfGRGRACP-TRX} is
similar to the Concurrent Kleene Algebra of Hoare et
al.~\cite{DBLP:journals/jlp/HoareMSW11}, although they are
distinguished by their subtly different axiomatizations. In general,
program algebras can be used to unify diverse programming models
through an axiomatization of commonalities, allowing proofs of
fundamental program properties to be reused across program semantics
that instantiate the algebra. The simple algebraic characterisation
also facilities a straight-forward mechanisation of those proofs.

\subsection{Wide-spectrum language}

We motivate our semantic model and choice of primitive constructs
by considering a range of
specification and programming constructs they should be rich enough to express.

\myparagraph{Preconditions} 
A specification $c$ may be weakened by an assumption 
that its initial state satisfies a precondition $p$. 
In the sequential refinement calculus this is handled by a precondition command $\Pre{p}$
that terminates successfully if $p$ holds, otherwise it aborts,
i.e.\ it allows any behaviour whatsoever.
The weakened specification is written ``$\Pre{p} \scomp c$'', 
the sequential composition of $\Pre{p}$ and $c$.
The key point is that an implementation (refinement) $d$ of $\Pre{p} \scomp c$ 
must behave as $c$ when $p$ holds initially but is unconstrained
otherwise.

\myparagraph{Postconditions}
In the sequential refinement calculus a specification $\Post{q}$, in
which $q$ is a binary relation, represents a requirement that if
started in a state $\sigma$ it must terminate in a state $\sigma'$
such that $(\ssp) \in q$. 
A similar construct is required in the
context of concurrency but implementing such a specification is
challenging because of possible environment interference (see \refsect{end-to-end}). 
For this, stronger assumptions about the environment, such as rely conditions, are required
(see below). 
The sequential specification $\Post{q}$
only needs to consider the initial and final states and hence $q$ is a
binary relation. More generally, the execution of a program makes a
sequence of steps and hence a specification could be
generalised to constrain the sequence of steps. One such
generalisation is to handle Jones' guarantee conditions (see below).

\myparagraph{Rely conditions}
To provide a compositional approach to refining concurrent programs,
Jones \cite{Jones81d,Jones83b} introduced a rely condition, 
a binary relation $r$,
that represents an assumption that the interference the environment makes
between any two atomic steps of the program satisfies $r$ between its
before state and after state. 
Because there may be no environment
steps between two program steps, Jones requires $r$ to be reflexive,
i.e.\ for all states $\sigma$, $(\sigma,\sigma) \in r$,
and because there may be multiple environment steps between 
two program steps, Jones requires $r$ to be transitive,
i.e.\ $(\ssp) \in r$ and $(\sigma', \sigma'') \in r$ imply $(\sigma,\sigma'') \in r$ (see Figure~\ref{f:rely-guar}). 
Therefore, relation $r$ is equal to its reflexive, transitive closure $r^{\star}$.

\begin{figure}
\begin{center}
\input{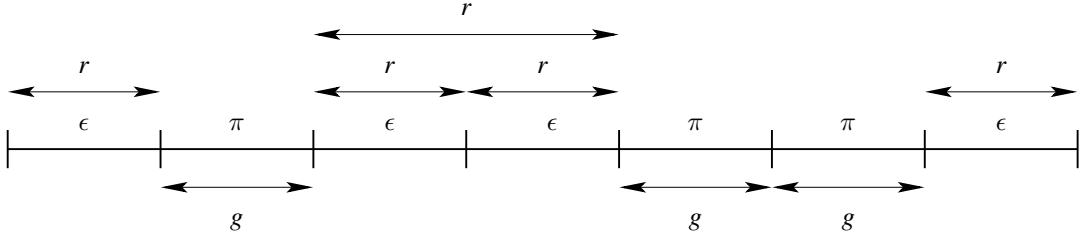}
\caption{Trace of execution with environment steps satisfying $r$ and program steps satisfying $g$}\label{f:rely-guar}
\end{center}
\end{figure}

Handling rely conditions in the semantics requires the ability to refer explicitly to
environment steps as well as program steps. 
The model presented in Section~\ref{s:semantics-new} is rich enough to allow one to define a command $\envc{r}$,
that represents an assumption that all environment steps satisfy the relation $r$.
In the same way that a precondition command $\Pre{p}$ aborts if $p$
does not hold initially, $\envc{r}$ aborts if the environment makes a step 
not satisfying $r$ (see \refsect{rely}).
An environment assumption $\envc{r}$ is added to a specification using the 
\strictconjunction\ operator (see \refsect{strict-conjunction}).

\myparagraph{Guarantee conditions}
If a program $c_2$ with a rely condition $r$ is part of the environment of a program $c_1$,
then $c_1$ should respect the rely condition $r$ of $c_2$,
i.e.\ every atomic program step taken by $c_1$ should satisfy $r$.
To handle this, Jones introduced a guarantee condition
--- a binary relation $g$ ---
and requires every atomic program step made by $c_1$ to satisfy $g$. 
As long as the guarantee $g$ of $c_1$ is contained in the rely $r$ of $c_2$, 
$c_1$ forms a suitable (part of the) environment for $c_2$.

A guarantee condition forms a restriction on the allowable steps made by a program.
The semantic model is rich enough that one can define a guarantee command
$\pguard{g}$, that forces every program step to satisfy the relation $g$ (see \refsect{guarantee}).
Such a guarantee can be combined with a specification using \strictconjunction\ 
(see \refsect{strict-conjunction}).
Guarantee conditions may be generalised to allow more flexible constraints on
the allowable sequences of steps.

\myparagraph{Atomic specifications and linearisability}
A \emph{linearisable} operation \cite{linearisability} must (appear to) take
place atomically between its invocation and response.  It does not alter
shared data outside of that single atomic step, although it may 
modify local data and perform other local computation
(see \refsect{linearisability}).
The framework straightforwardly handles atomic specifications, and distinguishes 
them from multi-step specifications such as postconditions described above.

\myparagraph{Temporal logic}
The semantic model is rich enough to encode temporal logics, such as linear temporal logic \cite{Pnueli77}.
Because the semantics distinguishes program and environment steps,
one can extend the temporal logic to distinguish between properties of the environment
and of the program, as well as relating the two.

\myparagraph{Expression evaluation}
In the context of concurrency, the evaluation of an expression is subject to interference from concurrent
programs modifying variables used within the expression
and hence expression evaluation cannot be assumed to be atomic. 
Expressions are used in assignment statements and control structures
such as conditionals ({\bf if}) and repetitions ({\bf while}). 
In the sequential refinement calculus a conditional control structure can be defined
in terms of a \emph{test} primitive $\STest{b}$, that succeeds if $b$ evaluates to true 
but is infeasible otherwise.
It can be defined as follows, where ``$\nondet$'' represents non-deterministic choice.
\begin{eqnarray*}
  \IF{b}{c_1}{c_2}  & \sdef &  (\STest{b} \scomp c_1) \choice (\STest{\neg b} \scomp c_2)
\end{eqnarray*}
Our approach preserves this form of definition by redefining the meaning of a
test to be non-atomic and hence allow for interference during expression evaluation (see \refsect{expressions}).

\subsection{Contributions}

The main contributions of this paper are 
(i)
a semantics that supports concurrency for both programming language and specification constructs
and hence reasoning about programs in a refinement calculus style, and 
(ii)
a theory of atomic program and environment steps that treats environment steps as first-class citizens.
The theory is based on a small set of primitive commands and
operators, allowing the development of algebraic properties of
operators. The semantics also forms an instance of a concurrent
program algebra~\cite{AFfGRGRACP-TRX}, which facilitates reasoning about
properties of the model at a high level of abstraction.
The theory supports rely and guarantee constructs, and it 
handles progress properties, e.g.\ temporal logic, including
extensions to handle properties of the environment as well as the
program. 

Our semantic model, primitive commands, and program combinators are
defined in Sections~\ref{s:semantics-new}, \ref{s:primitives} and
\ref{s:combinators}, respectively. The wide-spectrum language is
introduced in Section~\ref{s:wide-spectrum}, and we discuss how it can
be used to reason about progress properties in
Section~\ref{s:progress}. 
We conclude in Section~\ref{s:conclusions}.

\section{Semantic model}\label{s:semantics-new}

In our semantic model, program behaviours are described using traces
of primitive steps.  Section~\ref{s:states-steps} introduces the
states and the primitive steps used to construct these traces, and
Section~\ref{s:traces} defines the set of traces that are used to
represent commands,
along with some healthiness properties on sets of traces that represent commands.

\subsection{States and primitive steps}\label{s:states-steps}

To specify the behaviour of an executable command, 
it is sufficient to specify the allowable sequences of atomic steps $(\ssp)$ 
from state $\sigma \in \Sigma$ to state $\sigma' \in \Sigma$ made by the program,
where $\Sigma$ is the set of all possible program states.
The semantics presented below is largely independent of the form of the set of program states $\Sigma$.
One possible representation is as a store, i.e.\ a mapping from variables to values (including undefined), 
$\Sigma \sdefs Var \fun Val_{\bot}$,
which we use here for illustration.
Other representations are possible, for example, including a heap as well as a store.

Reactive-sequence semantics \cite{BoerHannemanDeRoever99,DeRoever01} uses traces of the form
\(
         \seqex{
              (\sigma_1, \sigma_2), 
              (\sigma_3, \sigma_4)
         },
\)
where the gap between $\sigma_2$ and $\sigma_3$ implicitly represents interference from the environment. 
In order to handle rely conditions, Peter
Aczel~\cite{Aczel83,CoJo07,BoerHannemanDeRoever99,DeRoever01} needed
to explicitly represent both environment and program steps,
differentiating an environment step from $\sigma$ to $\sigma'$,
written $\sigma\Estepd\sigma'$ here, from a program step, written
$\sigma\Pstepd\sigma'$.  When steps are combined to form a trace, the
post state $\sigma_1$ of each step $\sigma_0 \Pstepd \sigma_1$ must
equal the pre state of the next step $\sigma_1 \Pstepd \sigma_2$ --
and such a trace is said to be \emph{consistent}.  Because of this
consistency, the two steps in sequence can be abbreviated to $\sigma_0
\Pstepd \sigma_1 \Pstepd \sigma_2$.  For example, the following is a
valid \emph{Aczel trace}.
\begin{equation}\label{AczelTrace2}
	       \sigma_0 \Estepd \sigma_1 \Pstepd \sigma_2 \Estepd \sigma_3 \Pstepd \sigma_4 \Pstepd \sigma_5 \Estepd \sigma_6
\end{equation}
Note that environment steps are allowed before any program steps,
$\sigma_0 \Estepd \sigma_1$, and after all program steps,
$\sigma_5 \Estepd \sigma_6$.
Aczel traces also allow multiple program steps without intervening environment
steps and vice versa.%
\footnote{De Roever \cite{DeRoever01} makes use of labels unique to
  each location and each thread in a program, rather than program
  ($\pstepd$) and environment ($\estepd$) markers.  Here we follow
  Aczel's original simpler formulation more closely.}

Sequences of steps may be finite or infinite, and we use infinite
Aczel traces to represent non-terminating behaviours. Finite traces
represent behaviours that have either terminated successfully, aborted
due to a failure of the program, aborted due to a failure of the
environment, or become infeasible. These four cases are distinguished
by including one of four special kinds of atomic step at the end of
each finite trace. These special steps may only appear as the last
step in a trace.
\begin{itemize}
\item Like De Roever~\cite{DeRoever01}, we use steps of the form
  $\sigma \tmtd$ to represent termination from some initial state
  $\sigma$. For example the Aczel trace $\sigma_0 \Estepd \sigma_1
  \Pstepd \sigma_2 \tmtd$, performs two steps and then terminates from
  the last state $\sigma_2$.
\item To allow for aborting behaviour of programs, such as a
  precondition or rely condition failing to hold, the state space
  $\Sigma$ is extended with a distinguished undefined state
  $\botstate$, giving us $\Sigma_{\bot} \sdefs \Sigma | \botstate~.$
  We then use step $\sigma \Pstepd \botstate$ to represent the program
  taking an aborting step, and $\sigma\Estepd\botstate$ to represent
  the environment taking an aborting step, e.g.\ $\sigma_0 \Estepd
  \sigma_1 \Pstepd \sigma_2 \Pstepd \botstate$ or $\sigma_0 \Estepd
  \sigma_1 \Pstepd \sigma_2 \Estepd \botstate$.
\item A step from a state that does not lead anywhere, such as
    $\sigma$ on its own, represents a step that is
    infeasible, e.g.\ the trace $\sigma_0 \Estepd \sigma_1 \Pstepd
    \sigma_2$ performs two steps and then becomes infeasible from the
    last state $\sigma_2$. Programs may become infeasible right away,
    before taking any steps, e.g. the trace $\sigma_0$ is infeasible
    from its initial state $\sigma_0$. At first glance, the inclusion of
    infeasible traces may seem unnecessary, since they are, after all,
    unimplementable. However, in the same way that partial functions
    may be combined to create total functions, infeasible program
    behaviours may be combined to specify implementable ones, and the
    semantics benefits from the additional expressivity. For example,
    infeasibility may be used to represent the fact
    that a specification has conflicting requirements that cannot be
    met.
\end{itemize}

\subsection{Traces and commands}\label{s:traces}

We now formalise the Aczel traces that were introduced in the previous section.
Fundamental definitions for traces used throughout the paper are contained in
\reffig{summary}.
A trace -- see~(\ref{Tr}) in
Figure~\ref{fig:summary} -- is denoted by an initial state $\sigma \in
\Sigma$ and a possibly infinite sequence of atomic steps $t \in \seq
Step$, where we define
\begin{eqnarray*}
Step  & \sdefs &   \Pstep{\Sigma_{\bot}} | \Estep{\Sigma_{\bot}} | \tmtd ~.
\end{eqnarray*}
For example, the trace written informally as 
$ \Tra{\sigma_0}{~\Estepd ~ \sigma_1 ~ \Pstepd ~ \sigma_2 ~ \Estepd ~ \sigma_3 ~ \Pstepd ~ \sigma_4 ~ \Estepd ~ \sigma_5 ~ \tmtd} $
is formalised as the trace
\begin{displaymath}
  \Trf{\sigma_0}{\Estep{\sigma_1}, \Pstep{\sigma_2}, \Estep{\sigma_3}, \Pstep{\sigma_4}, \Estep{\sigma_5}, \tmtd}~.
\end{displaymath}
This form relies on Aczel traces being consistent. The $Step$ definition
does not need to include explicitly infeasible steps, since
these are implicitly represented. For example, the
trivial trace,
\(
  \Trf{\sigma}{~},
\)
represents infeasible behaviour starting from $\sigma$.%
\footnote{Here we differ from both Aczel \cite{Aczel83} and De Roever \cite{DeRoever01},
who do not allow empty traces.}
A healthiness condition on traces is that 
a termination step, $\tmtd$, or an aborting step, either $\Pstep{\botstate}$ or $\Estep{\botstate}$, can
occur only as the last step of a (finite) trace (\ref{Tseq}).

We define a partial ordering, $\leq$, on traces that describes the
conditions under which one trace $\Trf{\sigma_2}{t_2}$ is a valid
refinement of another $\Trf{\sigma_1}{t_1}$ as follows. 
\begin{eqnarray}
\Trf{\sigma_1}{t_1} \leq \Trf{\sigma_2}{t_2} & \sdefs &   \sigma_1 = \sigma_2 \land {} \nonumber\\
& &  (t_2 \in prefix(t_1) \lor (\exists t_3 \in Tseq \spot finite(t_3) \land t_1 = t_3 \cat \Pstep{\botstate} \land t_3 \in prefix(t_2)) \label{TRREF}
\end{eqnarray}
Informally, this captures the notion that a trace may be refined by
itself (since for any $t\in Tseq$, $t \in prefix(t)$), or by a trace
that replaces a program abort step by any subsequent behaviour, or
by a (prefix) trace that becomes infeasible.
For example we
have that
\begin{eqnarray*}
  && \Trf{\sigma_0}{\Pstep{\botstate}}\\
  \leq
  && \Trf{\sigma_0}{\Estep{\sigma_1}, \Pstep{\botstate}}\\
  \leq
  && \Trf{\sigma_0}{\Estep{\sigma_1}, \Pstep{\sigma_2}, \Pstep{\sigma_3}, \Estep{\sigma_4}, \tmtd}\\
  \leq
  && \Trf{\sigma_0}{\Estep{\sigma_1}, \Pstep{\sigma_2}}\\
  \leq
  && \Trf{\sigma_0}{} ~. 
\end{eqnarray*}
For each initial state $\sigma_0$, the set of traces that start from
$\sigma_0$ form a complete lattice under the trace refinement
ordering, with least element $\Trf{\sigma_0}{\Pstep{\botstate}}$, and
greatest element $\Trf{\sigma_0}{}$. In this ordering there is an
asymmetry to the interpretation of program abort steps and
environment abort steps. The appearance of a $\Pstepd{\botstate}$
step in a trace signifies that any behaviour whatsoever is acceptable
from that step onwards (hence it may be arbitrarily refined), whereas
$\Estepd{\botstate}$ does not: the only step tolerable at that point
is environmental failure. Intuitively, this reflects our desire to
distinguish an aborting program -- one that may exhibit any possible
behaviour -- from one that will not fail unless that failure is
brought about by its environment.

In our trace lattice for an initial $\sigma_0$, we have, for example, that
\begin{eqnarray*}
  \Trf{\sigma_0}{\Estep{\sigma_1}, \Pstep{\sigma_2}, \Pstep{\sigma_3}, \Estep{\sigma_4}, \tmtd}
  & \sqcup & 
  \Trf{\sigma_0}{\Estep{\sigma_1}, \Estep{\sigma_2}, \Pstep{\sigma_3}, \tmtd}
  ~=~ \Trf{\sigma_0}{\Estep{\sigma_1}}
\end{eqnarray*}
where 
$\sqcup$ is the 
supremum operator.
Note how the trace becomes infeasible at the first incompatible step.

The trace ordering is lifted pointwise on initial states to define an
ordering on the set of commands themselves.
The set of commands, (\ref{Com}) in Figure~\ref{fig:summary}, are
denoted by sets of traces with additional healthiness requirements
imposed by the $\prefixclosure{..}$ and $\abortclosure{..}$ conditions.
These conditions are equivalent to requiring that each command has at
least one trace $\Trf{\sigma}{~}$, starting from each possible initial
state $\sigma$, and that they are up-closed with respect to the trace
ordering (\ref{TRREF}). This allows us to define refinement of
commands by trace inclusion as per usual: for commands $c$ and $d$ we
have that
\begin{eqnarray*}
  c \refsto d & \sdefs & c \supseteq d~.
\end{eqnarray*}

In more detail, the empty closure (\ref{empty-closure}) of a set of
traces, $s$, adds all traces of the form $\Trf{\sigma}{~}$ for all
$\sigma \in \Sigma$ and the prefix closure (\ref{prefix-closure}) adds
all prefixes of the traces of $s$ as well.
The aborting traces of $s$ are those ending in $\Pstep{\botstate}$ but with that last step removed (\ref{aborting}).
Abort completing a set of (aborting) traces adds all possible extensions of the traces (\ref{abort-complete})
and the abort closure of a set of traces completes all its aborting traces (\ref{abort-closure}). 
Note that the abort closure only applies to traces ending in $\Pstep{\botstate}$ and not traces ending in $\Estep{\botstate}$ for the reasons discussed above.
Commands are represented by sets of traces that are both prefix closed (including empty closed) and abort closed (\ref{Com}).\footnote{From the perspective of mechanisation using a theorem prover, we find the definitions of prefix closure and abort closure more practical than up-closing directly with respect to the trace ordering (\ref{TRREF}).}

\begin{figure}
Let $s$ be a set of traces.
For a sequence $t$, $\dom(t)$ stands for the set of indices of $t$,
which may be infinite if $t$ is an infinite sequence.
\begin{eqnarray}
  Tr & \sdefs & \Sigma \cross Tseq
    \label{Tr} \\
  Tseq & \sdefs & \Zset{ t \in \seq Step }{ \forall i \in \dom(t) \spot t(i) \in \{ \Pstep{\botstate}, \Estep{\botstate}, \tmtd \} \implies i+1 \notin \dom(t)) }{} 
    \label{Tseq} \\
  \emptyclosure{s} & \sdefs & s \union \{ (\sigma,t) \in Tr | t = [\,] \} 
    \label{empty-closure} \\
  \prefixclosure{s} & \sdefs & \emptyclosure{s} \union \Zset{ \Trt{\sigma}{t} \in Tr }{(\exists t' \spot \Trt{\sigma}{t'} \in s \land t \in prefix(t')) }{} 
    \label{prefix-closure} \\
  aborting(s) & \sdefs & \{ \Trt{\sigma}{t} \in Tr | finite(t) \land \Trt{\sigma}{t \cat [\Pstep{\botstate}]} \in s\}
    \label{aborting} \\
  abort\_complete(s) & \sdefs & \Zset{ \Trt{\sigma}{t} \in Tr }{ (\exists t' \in prefix(t) \spot (\sigma,t') \in s) }{} 
    \label{abort-complete} \\
  \abortclosure{s} & \sdefs & s \union abort\_complete(aborting(s))
    \label{abort-closure} \\
  \Command & \sdefs & \Zset{ s \in \power Tr }{ s = \prefixclosure{s} \land s = \abortclosure{s} }{}
    \label{Com}
\end{eqnarray}

\caption{Traces and associated auxiliary functions}\label{fig:summary}
\end{figure}

\section{Primitive commands}\label{s:primitives}

A novelty of our approach is that the semantics of commands is defined in terms
of a small set of primitive commands 
plus a set combinators, such as sequential and parallel composition.
The formal definition of the primitive commands is given in Figure~\ref{fig:primitive-commands} in which 
the notation $\Zset{x \in S}{p(x)}{e(x)}$ stands for the set of all the values of the expression $e(x)$ 
for the bound variable $x$ ranging over the values in the set $S$ such that the predicate $p(x)$ holds,
$\Zset{x \in S}{}{e(x)}$ abbreviates $\Zset{x \in S}{\True}{e(x)}$, and
$\Zset{x \in S}{p(x)}{}$ abbreviates $\Zset{x \in S}{p(x)}{x}$.
Aczel's program and environment steps motivate the first two primitive commands,
and De Roever's terminating step motivates the third.
\begin{itemize}
\item
$\cpstepr$ can make a single program step $\sigma\Pstepd\sigma'$ satisfying binary relation $r$, 
i.e.\ $(\sigma,\sigma') \in r$, and terminate (\ref{cpstepr}).
\item
$\cestepr$ can make a single environment step $\sigma\Estepd\sigma'$ satisfying $r$ and terminate (\ref{cestepr}).
\item
$\cgdp$ can make a terminating step $\sigma\tmtd$ from any state $\sigma$ in the set of states $p$ 
(\ref{cgdp}).
\end{itemize}
The prefix and empty closures in definitions (\ref{cpstepr})--(\ref{cgdp}) ensure 
the commands satisfy the healthiness condition of being abort and prefix (and hence empty) closed.
Our final three primitive commands handle aborting behaviours.
\begin{itemize}
\item
$\Abort$ allows any behaviour at all (\ref{Abort}). It equals the
  abort and prefix-closure of the set $\Zset{ \sigma \in \Sigma }{}{
    \Trf{\sigma}{\Pstep{\botstate}}}$. % and is the least element in the program lattice. 
\item $\EAbort$ allows the environment to abort (\ref{EAbort}).
\item $\cestepbotr$ allows the environment to either abort or do a step $\sigma\Estepd\sigma'$ satisfying $r$ and terminate (\ref{cestepbotr}).
\end{itemize}

\begin{figure}
Let 
$r$ be a binary relation on states, $r \subseteq \Sigma \cross \Sigma$,
and
$p$ be a set of states, $p \subseteq \Sigma$.
\begin{eqnarray}
  \cpstepr & \sdefs & \prefixclosure{\Zset{ (\ssp) \in r }{}{ \Trf{\sigma}{\Pstepsp, \tmtd} }} 
    \label{cpstepr} \\
  \cestepr & \sdefs & \prefixclosure{\Zset{ (\ssp) \in r }{}{ \Trf{\sigma}{\Estepsp, \tmtd} }} 
    \label{cestepr} \\
  \cgdp   & \sdefs & \emptyclosure{\Zset{ \sigma \in p }{}{ \Trf{\sigma}{\tmtd} }} 
  \label{cgdp} \\
  \Abort & \sdefs & Tr
    \label{Abort} \\
  \EAbort & \sdefs & \emptyclosure{\Zset{ \sigma \in \Sigma }{}{ \Trf{\sigma}{\Estep{\botstate}} }} 
    \label{EAbort} \\
      \cestepbotr & \sdefs & \cestepr \cup \EAbort 
    \label{cestepbotr}
\end{eqnarray}
\caption{Semantics of primitive commands}\label{fig:primitive-commands}
\end{figure}

We introduce the following abbreviations for the
common cases where the (program or environment step) relation
is the universal relation, $\universalrel \sdefs
\Sigma \cross \Sigma$, that relates all pairs of states, and where the
test set is either the set of all states, $\Sigma$, or the empty
set, $\emptyset$:
\begin{displaymath}
 \begin{array}[t]{rcl}
  \cpstepd & \sdefs & \cpstep{\universalrel} \\
  \Nil & \sdefs & \cgd{\Sigma}
 \end{array}
 \hspace*{2em}
 \begin{array}[t]{rcl}
  \cestepd & \sdefs & \cestep{\universalrel} \\
  \Magic & \sdefs & \cgd{\emptyset}
 \end{array}
 \hspace*{2em}
 \begin{array}[t]{rcl}
   \cestepbotd & \sdefs & \cestepbot{\universalrel}
 \end{array}
\end{displaymath}
The command $\cpstepd$ can make any single program step and terminate;
$\cestepd$ can make any single environment step and terminate, and
$\cestepbotd$ allows the environment to abort or take any environment
step and terminate.  Command $\Nil$ terminates immediately from any
initial state,
while $\Magic$ is infeasible from all initial states, having no traces
other than $\Trf{\sigma}{~}$ for all states $\sigma \in \Sigma$.
Note that $\cpstep{\emptyrel} = \cestep{\emptyrel} = \Magic$.

If $r_2 \subseteq r_1$, the refinements $\cpstep{r_1} \refsto
\cpstep{r_2}$ and $\cestep{r_1} \refsto \cestep{r_2}$ hold, and if
$p_2 \subseteq p_1$, $\cgd{p_1} \refsto \cgd{p_2}$.  Commands form a
complete lattice under the refinement ordering with $\Abort$ as the
least element and $\Magic$ as the greatest element, that is, for any
$c$, $\Abort \refsto c \refsto \Magic$.

\section{Program combinators}\label{s:combinators}

This section defines the primitive program combinators found in 
Figure~\ref{fig:primitive-operators} (using the auxiliary functions
defined in both Figures~\ref{fig:summary} and
\ref{fig:primitive-operators}).
The choice operators (the infimum and supremum of the program lattice) can
be found in Section~\ref{s:choice}, sequential composition in
Section~\ref{s:sequential}, and Section~\ref{s:iteration} uses these
to define iteration constructs using fixed points.
Section~\ref{s:parallel} defines parallel composition and
Section~\ref{s:strict-conjunction} a weak conjunction
operator~\cite{FACJexSEFM-14} that is used in Section~\ref{s:rely} to
define constructs suitable for specifying relies and guarantees.
Section~\ref{s:hide} introduces a program variable unrestriction
operator used in the definition of local variable blocks.

\subsection{Infimum and supremum}\label{s:choice}

A nondeterministic choice (or \emph{infimum}), $\Nondet C$, over a
set $C$ of commands gives the union of the behaviour of its components
(\ref{nondeterministic-choice}).  In this definition, the empty
closure ensures that the choice is a valid command, even if the choice
is over the empty set. Supremum ($\Supremum$) in the lattice of
commands is represented by intersection of sets of traces.

As binary operators $\sqcap$ (\ref{binary-choice}) and $\sqcup$
(\ref{binary-sup}) are associative, commutative and
idempotent. Operator $\sqcap$ has the identity $\Magic$ (i.e.\ $\Magic
\nondet c = c$) and annihilator $\Abort$ (i.e.\ $\Abort \nondet c =
\Abort$), and $\sqcup$ has the identity $\Abort$ and annihilator
$\Magic$.
When applied to primitive commands, the choice operators satisfy the following properties.
\begin{eqnarray*}
     \cgd{p_1} \nondet \cgd{p_2} ~~=~~  \cgd{p_1 \union p_2}
     &~~~\textrm{and}~~~&
     \cgd{p_1} \sqcup \cgd{p_2} ~~=~~  \cgd{p_1 \int p_2} \\
     \cpstep{r_1} \nondet \cpstep{r_2}  ~~=~~  \cpstep{r_1 \union r_2}
     &~~~\textrm{and}~~~&
     \cpstep{r_1} \sqcup \cpstep{r_2}  ~~=~~  \cpstep{r_1 \int r_2} \\
     \cestepbot{r_1} \nondet \cestepbot{r_2} ~~=~~ \cestepbot{r_1 \union r_2}
     &~~~\textrm{and}~~~&
     \cestepbot{r_1} \sqcup \cestepbot{r_2} ~~=~~ \cestepbot{r_1 \int r_2}
\end{eqnarray*}

\begin{figure}
Let $C$ be a set of commands, $c$, $c_1$ and $c_2$ be commands and $x$ a variable name.
\begin{eqnarray}
  \Nondet C & \sdefs & \emptyclosure{\Union C}
    \label{nondeterministic-choice} \\
  c_1 \nondet c_2 & \sdefs & \Nondet \{c_1 , c_2 \}   ~~=~~  c_1 \union c_2 
  \label{binary-choice} \\
  \Supremum C & \sdefs & (\Intersect C) %\int Tr
    \label{supremum} \\
    c_1 \sqcup c_2 & \sdefs & \Supremum \{c_1 , c_2 \}   ~~=~~  c_1 \int c_2 
    \label{binary-sup} \\
    c_1 \Seq c_2 & \sdefs & \begin{array}[t]{l}
         abort\_closure(c_1 - terminating(c_1)) \cup (terminating(c_1) \cat c_2)
      \end{array}
    \label{sequential} \\
  c\FinIter & \sdefs & (\nu x \spot \Nil \nondet c \SSeq x)
    \label{finite-iteration} \\
  c\InfIter & \sdefs & (\mu x \spot \Nil \nondet c \SSeq x) 
    \label{finite-or-infinite-iteration} \\
  c^{\infty} & \sdefs & c\InfIter \SSeq \Magic
    \label{infinite-iteration} \\
  c_1 \parallel c_2 & \sdefs & %\begin{array}[t]{l}
                                                \abortclosure{\Zset{ tr \in Tr }{ 
                                                (\exists tr_1 \in c_1, tr_2 \in c_2 \spot match\_trace(tr_1,tr_2,tr)) }{} }
       \label{parallel} \\
  c_1 \together c_2 & \sdefs & (c_1 \int c_2) \union abort\_first(c_1,c_2) \union abort\_first(c_2,c_1)
    \label{weak-conjunction} \\
  c \hide x & \sdefs & \{(\sigma, t) \in Tr | \exists (\sigma', t') \in c \spot x \ndres (\sigma, t) = x \ndres (\sigma', t') \}
	\label{hide}
\end{eqnarray}
where for $s$, $s_1$ and $s_2$ sets of traces, $z_1$ and $z_2$ steps, $\sigma$, $\sigma_1$ and $\sigma_2$ states,
and $t$, $t_1$ and $t_2$ traces,
\begin{eqnarray}
  terminating(s) & \sdefs & \{ \Trt{\sigma}{t} \in Tr | finite(t) \land \Trt{\sigma}{t \cat [\tmtd]} \in s \}
    \label{terminating} \\
  s_1 \cat s_2 & \sdefs & \{ \begin{array}[t]{l} 
                                            (\sigma_1, t_1) \in s_1; (\sigma_2, t_2) \in s_2 | finite(t_1) \land  last\_state(\sigma_1,t_1) = \sigma_2 \spot \\
                                            ~~~~~ (\sigma, t_1 \cat t_2) \}
                                           \end{array}
    \label{concatentation} \\
  last\_state(\sigma,\seqex{})  & = & \sigma \\
  last\_state(\sigma, t \cat \seqex{z}) & = & \sigma' ~~~~\mbox{if $z \in \{ \Pstepsp, \Estepsp \}$ } 
    \label{last-state} \\
  match\_step(z_1,z_2,z) & \sdefs & \exists \sigma \in \Sigma_{\bot} \spot 
     \begin{array}[t]{l}
       z_1 = \Psteps \land z_2 = \Esteps \land z = \Psteps \lor {} \\
       z_1 = \Esteps \land z_2 = \Psteps \land z = \Psteps \lor {} \\
       z_1 = \Esteps \land z_2 = \Esteps \land z = \Esteps \lor {} \\
       z_1 = \tmtd \land z_2 = \tmtd \land z = \tmtd
     \end{array} 
     \label{match-step} \\
  match\_trace(\Trt{\sigma_1}{t_1},\Trt{\sigma_2}{t_2},\Trt{\sigma}{t}) & \sdefs & \sigma_1 = \sigma_2 = \sigma \land 
                                                          \dom(t_1) = \dom(t_2) = \dom(t) \land {} \\
                                        &  &           (\forall i \in \dom(t) \spot match\_step( t_1(i), t_2(i), t(i) ) ) 
       \label{match-trace} \\
  abort\_first(c_1,c_2) & \sdefs & abort\_complete(aborting(c_1) \int c_2)
    \label{abort-first} \\
  x \ndres (\sigma,t) & \sdefs & (\{x\} \ndres \sigma, (\lambda i \in \dom(t) \spot \{x\} \ndres t(i)))
\end{eqnarray}

\caption{Semantics of primitive operators}\label{fig:primitive-operators}
\end{figure}

\subsection{Sequential composition}\label{s:sequential}

Sequential composition $c_1 \Seq c_2$, written $c_1 c_2$
for short, retains all the unterminated traces of $c_1$ (these include
infeasible/incomplete (prefix) traces, aborting traces and infinite
traces) as well as the terminated traces $tr_1 \in c_1$ concatenated
with traces $tr_2 \in c_2$ that start from the last state of $tr_1$
(\ref{sequential}).  The definition makes use of the set of
terminating traces of a command (\ref{terminating}); note that the
final $``\tmtd$'' step is removed from the terminating traces as part
of this definition.  For two sets of traces $s_1$ and $s_2$, for which
the traces of $s_1$ are finite, $s_1 \cat s_2$ forms the set of
concatenations of traces $tr_1 \in s_1$ and $tr_2 \in s_2$ such that
the last state (\ref{last-state}) of $tr_1$ matches the initial state
of $tr_2$ (\ref{concatentation}).  The abort closure of the
unterminated traces of $c_1$ is used to ensure the set of traces of
the sequential composition is abort closed.

Sequential composition is associative.
It has identity $\Nil$ and
left annihilators of $\Magic$ and $\Abort$.
However note that in general, 
$(c \SSeq \Abort) \neq \Abort$ and $(c \SSeq \Magic) \neq \Magic$,
for example, if $c$ has no terminating traces.
Sequential composition distributes over nondeterministic choice from the right:
$(\Nondet C) \Seq d = \Nondet \{ c \in C \spot c \Seq d \}$
but only over finite choices from the left: 
$c \Seq (d_1 \nondet d_2) = (c \Seq d_0) \nondet (c \Seq d_2)$.
Sequential compositions of primitive commands satisfy the following properties,
where $p \dres r$ is the relation $r$ restricted so that its domain is contained in $p$.
\begin{eqnarray*}
  \cgd{p_1} \SSeq \cgd{p_2} & = & \cgd{p_1 \int p_2} \\
  \cgd{p} \SSeq \cpstep{r} & = & \cpstep{p \dres r} \\
  \cgd{p} \SSeq \cestepbot{r} & = & \cestepbot{p \dres r}
\end{eqnarray*}
Notationally we give sequential composition a higher precedence than all the other binary
operators.  

Using nondeterministic choice and sequential composition one can define 
the precondition command $\Pre{p}$ (\ref{precondition});
a command, $\optr{r}$, that allows a program step $\cpstep{r}$ to be optional for states $\sigma$ such that 
$(\sigma,\sigma) \in r$ by using a silent $\tau$ step instead in this case,
thus allowing code optimisations that remove redundant steps (\ref{optr});
and
an atomic update of a variable $x$ to a constant $\lvar$, that is defined using $opt$
so that it can be optimised to the null command if $x$ already equals $\lvar$ (\ref{update}).
The notation $\sigma[x \backslash \lvar]$ stands for the state $\sigma$
with the value of $x$ updated to be $\lvar$.
\begin{eqnarray}
  \Pre{p} & \sdefs & \cgdp \nondet (\cgd{\overline{p}} \SSeq \Abort)
    \label{precondition} \\
  \optr{r} & \sdefs & \cpstep{r} \nondet \cgd{\Zset{ \sigma \in \Sigma }{ (\sigma,\sigma) \in r }{} }
    \label{optr} \\
  update(x,\lvar) & \sdefs & \optr{\Zset{ (\sigma, \sigma') \in \Sigma \times \Sigma }{ \sigma' = \sigma[x \backslash \lvar] }{} }
	\label{update}
\end{eqnarray}
The precondition command $\Pre{p}$ terminates immediately from states in $p$ 
but aborts from states in the complement $\overline{p}$ of $p$.
Note that 
$\Pre{\Sigma} = \cgd{\Sigma} \nondet (\cgd{\emptyset} \SSeq \Abort) = \Nil \nondet (\Magic \SSeq \Abort) = \Nil$ 
and $\Pre{\emptyset} = \cgd{\emptyset} \nondet (\cgd{\Sigma} \SSeq \Abort) = \Magic \nondet (\Nil \SSeq \Abort) = \Abort$.

\subsection{Recursion and iteration}\label{s:iteration}

The set of all commands under the refinement ordering forms a complete lattice and
hence least and greatest fixed points can be defined over monotone functions from commands to commands.
For a monotonic function $f$ from commands to commands,
$\mu f$ is its least fixed point and $\nu f$ is its greatest fixed point.
As usual $\mu (\lambda x \spot f(x))$ is abbreviated as $\mu x \spot f(x)$,
and likewise for greatest fixed points.

The iteration operators are defined in terms of least ($\mu$) and greatest ($\nu$) fixed points
with respect to the refinement ordering.
The command 
$\Fin{c}$ iterates $c$ a finite number of times, zero or more (\ref{finite-iteration});
$\Om{c}$ iterates $c$ any number of times, zero or more and possibly infinitely many times (\ref{finite-or-infinite-iteration});
and
$\Inf{c}$ iterates $c$ an infinite number of times (\ref{infinite-iteration}).
The only feasible traces of $\Inf{c}$ are the infinite traces of $\Om{c}$.

A number of distinguished commands such as $\Skip$ and $\Idle$ can be
defined using iteration and the programming primitives.
Whereas $\Nil$ terminates immediately with no environment or program steps,
the command $\Skip$ allows any environment steps but no program steps,
and
the command $\Idle$ allows any environment steps as well as
a finite number of stuttering program steps 
(i.e.\ steps that do not change the state);
it represents the no-op command of code.%
\footnote{It is similar to the $\Idle$ command in the real-time context
that changes nothing but allows time to pass \cite{DaT,IWfSaRtPA}.}
The relation $\id$ is the identity relation on states.
\begin{eqnarray*}
 \begin{array}[t]{rcl}
  \Skip & \sdefs & \Om{\cestepbotd}
 \end{array}
 \hspace*{8em}
 \begin{array}[t]{rcl}
  \Idle & \sdefs & \Fin{(\cpstep{\id} \nondet \cestepbotd)} \SSeq \Om{\cestepbotd}
 \end{array}
\end{eqnarray*} 
Note that although $(\Nil \SSeq c) = c$,
it is not the case that $(\Skip \SSeq c) = c$ in general
because $\Skip$ allows environment steps that may change the state.
The command $\Idle$ is refined by $\Skip$ 
(and is equivalent to $\Skip$ modulo finite stuttering)
and $\Skip$ is refined by $\Nil$, i.e.\ $\Idle \refsto \Skip \refsto \Nil$.

Two further distinguished commands are $\Chaos$ and $\Term$.
The command $\Chaos$ represents a program that can do any non-aborting
behaviour and allows its environment to do anything, including abort.
The command $\Term$ represents terminating program behaviour (but
can't force its environment to terminate), i.e.\ if $\Term \refsto c$
then $c$ has finitely many program steps.  Termination is discussed in
more detail in Section \ref{s:termination}.
\begin{eqnarray*}
 \begin{array}[t]{rcl}
\Chaos & \sdefs & \Om{(\cpstepd \nondet \cestepbotd)}
 \end{array}
 \hspace*{5em}
 \begin{array}[t]{rcl}
\Term & \sdefs & \Fin{(\cpstepd \nondet \cestepbotd)} \SSeq \Om{\cestepbotd}
 \end{array}
\end{eqnarray*}

A preempted computation is one ending in an infinite sequence of environment steps.
Note that the command $\Term$ includes preempted behaviour ($\Term \refsto \Preempted$)
which reflects the notion that an implementation may not terminate if the environment takes over forever.
The command $\Forever$ represents all infinite behaviours.
Any non-aborting behaviour is either terminating or infinite: $\Chaos =
\Term \nondet \Forever$.  Such properties can be proved from the
algebraic properties of iterations \cite{Wright04}, such as $\Om{c} =
\Fin{c} \nondet \Inf{c}$ and $\Om{(c \nondet d)} = \Om{(\Om{d} \SSeq
  c)} \SSeq \Om{d}$.
\begin{eqnarray*}
 \begin{array}[t]{rcl}
  \Preempted & \sdefs & \Fin{(\cpstepd \nondet \cestepbotd)} \SSeq \Inf{\cestepbotd}
 \end{array}
 \hspace*{2em}
 \begin{array}[t]{rcl}
  \Forever & \sdefs & \Inf{(\cpstepd \nondet \cestepbotd)}
 \end{array}
\end{eqnarray*}

\subsection{Parallel composition}\label{s:parallel}

Parallel composition, $c_1 \pl c_2$,
synchronises a program step of $c_1$ with an environment step of $c_2$
to give a program step of $c_1 \pl c_2$ (and vice versa), and synchronises an
environment step of $c_1$ with an environment step of $c_2$ to give an
environment step of $c_1 \pl c_2$ -- see $match\_step$ (\ref{match-step}) in Figure~\ref{fig:primitive-operators}.
For example,
\[
  \begin{array} {ll}
    \Tra{\sigma_0}{~\Estepd~\sigma_1~\Pstepd~\sigma_2~\Pstepd~\sigma_3~\Estepd~\sigma_4~\tmtd} 
     & \mbox{ of $c_1$ in parallel with }\\
    \Tra{\sigma_0}{~\Pstepd~\sigma_1~\Estepd~\sigma_2~\Estepd~\sigma_3~\Estepd~\sigma_4~\tmtd}
     & \mbox{ of $c_2$ gives }\\
    \Tra{\sigma_0}{~\Pstepd~\sigma_1~\Pstepd~\sigma_2~\Pstepd~\sigma_3~\Estepd~\sigma_4~\tmtd}
     & \mbox{ of $c_1 \pl c_2$. }
  \end{array}
\]
Note that program steps can only be synchronised with a corresponding
environment step and cannot synchronise with another program step.
The synchronisations allow steps with after state $\botstate$ allowing synchronisation of 
aborting program and environment steps
or aborting environment steps of both programs.
For parallel composition a $\sigma\Pstepd\botstate$ step is matched by the
environment step $\sigma\Estepd\botstate$ to give the program step $\sigma\Pstepd\botstate$
of the parallel composition. 
Because a $\sigma\Pstepd\botstate$ step must be matched by a $\sigma\Estepd\botstate$ step, 
a (rather strong) form of specification is one that rules out $\sigma\Estepd\botstate$ steps. 
This gives a hint to the expressive power of the semantics.

Our approach differs from that of De Roever \cite{DeRoever01} in that 
we require both programs to terminate together,
whereas he allows one program to terminate early and the parallel composition to become the second program,
and hence he uses $\Nil$ as the identity of parallel composition.
Our approach is required to handle parallel composition of end-to-end specifications (see \refsect{end-to-end}).
Our definition of parallel composition does not include any notion of fairness
(see \refsect{progress}).
Parallel composition involving primitives satisfies simple rules such as the following.
\begin{eqnarray*}
  (\cgd{p_1} \SSeq c_1) \parallel (\cgd{p_2} \SSeq c_2) & = & \cgd{p_1 \int p_2} \SSeq (c_1 \parallel c_2)
  \\
  (\cpstep{r_1} \SSeq c_1) \parallel (\cestepbot{r_2} \SSeq c_2) & = & \cpstep{r_1 \int r_2} \SSeq (c_1 \parallel c_2) 
  \\
  (\cestepbot{r_1} \SSeq c_1) \parallel (\cestepbot{r_2} \SSeq c_2) & = & \cestepbot{r_1 \int r_2} \SSeq (c_1 \parallel c_2)
 \end{eqnarray*}

Parallel composition is associative and commutative.
Because $\Skip$ allows an arbitrary number of environment steps 
that may match any steps of $c$, $c \parallel \Skip = c$,
and hence $\Skip$ is the identity of parallel composition.
Note that $\Skip$ allows an infinite sequence of environment steps;
this allows it to synchronise with infinite traces of $c$.
Note that because $\Nil$ does not allow environment steps,
it is not the identity of parallel composition,
in fact $c \parallel \Nil$ can terminate only if $c$ behaves as $\Nil$.

A desirable property of parallel composition is that an infeasible program, e.g.\ $\Magic$,
in parallel with any command $c$, including $\Abort$, is infeasible.
In order to ensure $\Magic$ (which can make no steps at all)
annihilates $\Abort$, 
parallel composition requires that an aborting program step $\sigma \Pstepd \botstate$
is matched by an environment step $\sigma \Estepd \botstate$.
Because $\Magic$ doesn't include the step $\sigma \Estepd \botstate$ (it has no steps at all),
$\Magic \parallel \Abort = \Magic$.
More generally, parallel composition distributes over nondeterministic choice:
$(\Nondet C) \parallel d = \Nondet \{ c \in C \spot c \parallel d \}$.

\subsection{\Strictconjunction}\label{s:strict-conjunction}

Although the supremum operator allows us to conjoin commands,
i.e.\ $c_1 \sqcup c_2$ forms the intersection of the traces of $c_1$ and $c_2$
and hence is constrained to satisfy the specifications of both
$c_1$ and $c_2$, it is often too restrictive for our purposes. Most
programs fulfill commitments under given assumptions, and can fail if
those assumptions are not met. Often, one would like to constrain such
a specification, $c_1$ say, so that it satisfies an additional
commitment, $c_2$, only when $c_1$'s assumptions are met (i.e. it does
not abort). We use a \strictconjunction\ operator for that purpose,
and this allows us to handle rely and guarantee conditions is a
general way.%
\footnote{Some earlier papers refer to ``$\together$'' as strict conjunction (because it is abort strict).
This led to some confusion because it is not a strong as conjunction (intersection of traces).} 
The \strictconjunction\ of two commands,
$c_1 \together c_2$, synchronises non-aborting steps, e.g.,
\[\begin{array} {ll}
    \Tra{\sigma_0}{~\Estepd~\sigma_1~\Pstepd~\sigma_2~\Pstepd~\sigma_3~\tmtd}
     & \mbox{ of $c_1$ combines with}\\
    \Tra{\sigma_0}{~\Estepd~\sigma_1~\Pstepd~\sigma_2~\Pstepd~\sigma_3~\tmtd}
     & \mbox{ of $c_2$ to give}\\
    \Tra{\sigma_0}{~\Estepd~\sigma_1~\Pstepd~\sigma_2~\Pstepd~\sigma_3~\tmtd}
     & \mbox{ of $c_1 \together c_2$, }
\end{array}
\]
but if either performs a program abort step, it aborts, for example,
\[\begin{array} {ll}
    \Tra{\sigma_0}{~\Pstepd~\sigma_1~\Estepd~\sigma_2~\Pstepd~\botstate}
     & \mbox{ of $c_1$ combines with}\\
    \Tra{\sigma_0}{~\Pstepd~\sigma_1~\Estepd~\sigma_2}
     & \mbox{ of $c_2$ to give}\\
    \Tra{\sigma_0}{~\Pstepd~\sigma_1~\Estepd~\sigma_2~\Pstepd~\botstate}
     & \mbox{ of $c_1 \together c_2$. }
\end{array}
\]
The traces of $c_2$ include infeasible traces (and their prefixes)
so that in the \strictconjunction\ $c_1 \together c_2$, 
a trace of $c_1$ that leads to an aborting step is matched by a trace of $c_2$,
even if there are no feasible extensions of the trace of $c_2$.
A \strictconjunction\ $c_1 \together c_2$ contains the intersection of the traces of $c_1$ and $c_2$ as
well as the aborting traces of one for which there is a trace of the other that matches 
up until (but not including) the final abort step (\ref{weak-conjunction}).
As the sets of traces are prefix closed, any prefixes of $c_1$ and $c_2$ which match are included.
To handle \strictconjunction, care must be taken with commands that
include infeasible components. 
In the sequential refinement calculus for any terminating command $c$, 
\begin{eqnarray}\label{magicseqref}
                   c \SSeq \Magic = \Magic.
\end{eqnarray}
This property does not hold for the semantics given here because 
\begin{equation}\label{magic-together-abort}
 (c \SSeq \Abort) \together (c \SSeq \Magic) \refsto c \SSeq (\Abort \together \Magic) = c \SSeq \Abort 
\end{equation}
Were we to allow (\ref{magicseqref}) the left side reduces to 
\begin{eqnarray*}
 c \SSeq \Abort \together \Magic
\end{eqnarray*}
which assuming $c$ is not $\Abort$ reduces to $\Magic$.

\Strictconjunction\ is associative, commutative and idempotent.
It has identity $\Chaos$ and
$\Abort$ is an annihilator.
For example, one can deduce the following.
\begin{displaymath}
 \begin{array}{rcl}
  (\cgd{p_1} \SSeq c_1) \together (\cgd{p_2} \SSeq c_2) & = & \cgd{p_1 \int p_2} \SSeq (c_1 \together c_2) \\
  (\cpstep{r_1} \SSeq c_1) \together (\cpstep{r_2} \SSeq c_2) & = &  \cpstep{r_1 \int r_2} \SSeq (c_1 \together c_2) \\
  (\cestepbot{r_1} \SSeq c_1) \together (\cestepbot{r_2} \SSeq c_2) & = &  \cestepbot{r_1 \int r_2} \SSeq (c_1 \together c_2)
 \end{array}
 \hspace*{2ex}
 \begin{array}{rcl}
  c \together \Abort & = & \Abort \\
  (\cpstepr \SSeq c_0) \together (\cestepbotd \SSeq c_1) & = & \Magic
 \end{array}
\end{displaymath}

\subsection{Unrestricting a variable}\label{s:hide}

In order to define local variable blocks 
(in Section~\ref{s:local-variables})
the operator $c \hide x$ is used to define a program with traces that are the same as $c$, 
except they do not need to agree on values of the variable $x$.
Foster et al. use a similar operator in their work \cite[Sect.~5.4]{DBLP:conf/utp/FosterZW14}.
Let $x \ndres (\sigma,t)$ be the trace $(\sigma,t)$ with variable $x$ removed from $\sigma$ and 
from the state of every program and environment step in $t$.
The traces of the command $c \hide x$ are the same as those of $c$,
except they do not restrict the variable $x$ at all (\ref{hide}).

\section{A wide-spectrum language}\label{s:wide-spectrum}

In this section we extend the basic language to an imperative concurrent programming language
augmented with specification constructs. 
Section~\ref{s:expressions} addresses the semantics of expression evaluation 
without the common assumption that expression evaluation is atomic.
Section~\ref{s:imperative} make use of expressions to define typical imperative programming constructs:
assignment, ``if'' statement and ``while'' statement.
Section~\ref{s:termination} addresses specifying termination and fairness constraints.
End-to-end specifications are introduced in Section~\ref{s:end-to-end};
they are the generalisation of postcondition specifications to the concurrent context.
Section~\ref{s:rely} introduces rely and guarantee constructs that allow the language
to express Jones-style relational rely and guarantee specifications.
Local variable blocks are illustrated in Section~\ref{s:local-variables}.

\subsection{Expression evaluation}\label{s:expressions}

In the context of concurrency, expression evaluation is subject to interference on shared variables
and hence cannot be treated as being atomic.
As a consequence of this, expression evaluation is nondeterministic
because the interference may modify variables during evaluation.
Atomic read and write of shared variables is however assumed;
(structured) variables that cannot be read/written atomically need to be treated separately.

The syntax of expressions is standard, encompassing values, variables, and 
the typical unary and binary operators 
(to save space we do not consider generalisation to $n$-ary operators).
Although side effects within expressions can be handled by the semantics,
only expressions free from side effects are considered here
and hence any program steps made during expression evaluation are stuttering steps,
i.e.\ they do not modify the observable state.
To allow for any possible implementation strategy,
the set of traces of an expression evaluation is closed under finite stuttering,
i.e.\ if it includes a trace $t_1$,
it includes any other trace $t_2$ that is equivalent to $t_1$ modulo finite stuttering of program steps.
To allow for undefined expressions, the set $Val$ is augmented with the undefined value $\botvalue$
to give the set $Val_{\bot}$.

The command $\ceval{e}{\lvar}$ represents 
the evaluation of the expression $e$ to the value $\lvar \in Val_{\bot}$.
Evaluation of $e$ to $\lvar$ may either succeed or fail.
To model this in the semantics an evaluation that succeeds is terminating 
whereas a failing evaluation is infeasible and hence is eliminated from a nondeterministic choice
with a succeeding evaluation to some other value.
The semantics of expression evaluation is defined below,
where 
$x$ is a variable, $\ominus$ is a unary operator and $\oplus$ is a binary operator.
\begin{eqnarray}
  \ceval{\lvar_1}{\lvar_2} & \sdefs &\Idle \SSeq \cgd{\Zset{\sigma \in \Sigma}{\lvar_1 = \lvar_2}{}}  \qquad \qquad \qquad ~~\mbox{for $\lvar_1,\lvar_2 \in Val_{\bot}$} \label{expr-eval-lvar} \\
  \ceval{x}{\lvar}  & \sdefs & \Idle \SSeq \cgd{\{ \sigma \in \Sigma | \sigma(x) = \lvar \}} \SSeq \Idle ~~\qquad \qquad \mbox{for $\lvar \in Val_{\bot}$}   \label{expr-eval-var} \\
  \ceval{\ominus~e}{\lvar}  & \sdefs & \Nondet \Zset{ \lvar' \in Val_{\bot}}{ \lvar = eval(\ominus,\lvar') }{ \ceval{e}{\lvar'} }     \label{expr-eval-unary} \\
  \ceval{e_1~\oplus~e_2}{\lvar}  & \sdefs & \Nondet \Zset{ \lvar_1, \lvar_2 \in Val_{\bot} }{ \lvar = eval(\oplus,\lvar_1,\lvar_2) }{ (\ceval{e_1}{\lvar_1} \parallel \ceval{e_2}{\lvar_2}) }     \label{expr-eval-binary}
\end{eqnarray}
The evaluation $\ceval{\lvar_1}{\lvar_2}$ of a constant $\lvar_1$ to another constant $\lvar_2$ is represented by 
the command $\Idle$, which allows arbitrary interference and finite stuttering steps, followed by a test that the constants are equal (\ref{expr-eval-lvar}).
Note that the refinement $\ceval{\lvar_1}{\lvar_1} \refsto \ceval{\lvar_1}{\lvar_2}$ holds and hence 
$\Nondet \Zset{ \lvar_2 \in Val_{\bot} }{}{ \ceval{\lvar_1}{\lvar_2} } = \ceval{\lvar_1}{\lvar_1} = \Idle$.
Also, because constants $\lvar_1$ and $\lvar_2$ are independent of the state, 
the test $\cgd{\Zset{\sigma \in \Sigma}{\lvar_1 = \lvar_2}{}}$ is not subject to interference from the environment, 
and so $\ceval{\lvar_1}{\lvar_2} = \Idle \SSeq \cgd{\Zset{\sigma \in \Sigma}{\lvar_1 = \lvar_2}{}} \SSeq \Idle$.

To allow an assignment such as $x \asgn x$ to be optimised to $\Skip$ 
(which has no program steps),
the evaluation of the expression $x$ (and the assignment) must allow traces with no program steps.
The core of the definition of $\ceval{x}{\lvar}$ is therefore the instantaneous test 
$\cgd{\{ \sigma \in \Sigma | \sigma(x) = \lvar\}}$
which
is feasible only in states $\sigma$ such that $\sigma(x) = \lvar$.
A single reference to $x$ is assumed to be atomic; if $x$ was a
structured variable with non-atomic access, a more complex definition
would be needed.

Evaluation of a binary expression $e_1 \oplus e_2$ is nondeterministic 
because its value depends on the possibly changing values of variables
in $e_1$ and $e_2$ (and similarly for unary expressions).
For example, the traces of $\ceval{x + y}{\lvar}$
consist of all possible evaluations of both $\ceval{x}{\lvar_1}$ and $\ceval{y}{\lvar_2}$
such that $\lvar = \lvar_1 + \lvar_2$,
which is represented as a nondeterministic choice.
\begin{eqnarray*}
  \ceval{x + y}{\lvar} & = & \Nondet \Zset{ \lvar_1, \lvar_2 } { \lvar = \lvar_1 + \lvar_2 }{ \ceval{x}{\lvar_1} \parallel \ceval{y}{\lvar_2} }
\end{eqnarray*}
Subexpressions are evaluated in parallel to represent arbitrary order of
subexpression evaluation (including parallel evaluation).
For example, the evaluation of a (non-conditional) ``and'' of $b_1$ and $b_2$ to true
corresponds to evaluating both $b_1$ and $b_2$ to true in parallel, 
i.e.\ $ \ceval{b_1 \land b_2}{\True} = \ceval{b_1}{\True} \parallel \ceval{b_2}{\True}$.
\[
  \ceval{b_1 \land b_2}{\True}
 \Equals
   \Nondet \Zset{ \lvar_1, \lvar_2 }{ \True = eval(\land,\lvar_1,\lvar_2) }{ \ceval{b_1}{\lvar_1} \parallel \ceval{b_2}{\lvar_2} } 
 \Equals
   \Nondet \Zset{ \lvar_1, \lvar_2 }{ \lvar_1 = \True \land \lvar_2 = \True }{ \ceval{b_1}{\lvar_1} \parallel \ceval{b_2}{\lvar_2} }
 \Equals
   \ceval{b_1}{\True} \parallel \ceval{b_2}{\True}
\]
For undefined expressions, such as a division by zero, the result of the evaluation is the undefined value $\botvalue$,
for example, $\ceval{1/0}{\botvalue}$ succeeds.

\subsection{Imperative programming constructs}\label{s:imperative}

An assignment, $x \asgn e$, evaluates $e$ to a value $\lvar$
and then updates $x$ to $\lvar$ (\ref{assignment}).
While evaluation of the expression $e$ is non-atomic,
we assume that the act of storing of the value in the variable $x$ is atomic.
If the evaluation is undefined, the assignment aborts.
If the test in a conditional (\ref{conditional}) evaluates to true, $c_1$ is executed,
if it evaluates to false, $c_2$ is executed,
and
if its evaluation is undefined it aborts.
The body of a while loop (\ref{while-loop}) is repeated while its test evaluates to true.
The additional $(\cpstep{\id} \Seq \Idle)$ ensures it takes at least one program step on each iteration.
If its test evaluates to false, it terminates,
otherwise, if its test evaluation is undefined, it aborts.
\begin{eqnarray}
	x \asgn e
	&\sdefs&
	\Nondet \Zset{ \lvar \in Val }{}{
		\ceval{e}{\lvar} \SSeq update(x,\lvar) \SSeq \Idle} \nondet (\ceval{e}{\botvalue} \SSeq \Abort)
	\label{assignment}
	\\
  \If b \Then c_1 \Else c_2 & \sdefs & (\ceval{b}{\True} \SSeq c_1) \nondet (\ceval{b}{\False} \SSeq c_2) \nondet (\ceval{b}{\botvalue} \SSeq \Abort)
    \label{conditional} \\
  \While b \Do c & \sdefs & (\ceval{b}{\True} \SSeq c \Seq \cpstep{\id} \Seq \Idle)\InfIter \SSeq (\ceval{b}{\False} \nondet (\ceval{b}{\botvalue} \SSeq \Abort))
    \label{while-loop}
\end{eqnarray}

For example, the expansion of a simple assignment gives the following.
For brevity we assume $\lvar$ ranges over natural numbers only, but
the full case does not change the reasoning.

\begin{derivation}
	\step{
		x \asgn 1
	}

	\trans{=}{assignment definition}
	\step{
		\Nondet \Zset{ \lvar \in Val }{}{
		\ceval{1}{\lvar} \SSeq update(x,\lvar) \SSeq \Idle} \nondet (\ceval{1}{\botvalue} \SSeq \Abort)
	}

	\trans{=}{split nondeterministic choice}
	\step{
          (\ceval{1}{1}\SSeq update(x, 1) \SSeq \Idle) \nondet 
          (\Nondet \Zset{ \lvar \in Val }{\lvar \neq 1}{\ceval{1}{\lvar} \SSeq update(x,\lvar) \SSeq \Idle})
          \nondet (\ceval{1}{\botvalue} \SSeq \Abort)
	}

        \trans{=}{expand and simplify expression evaluation, $\Magic$ is left annihilator }
	\step{
          (\Idle \SSeq update(x, 1) \SSeq \Idle)
          \nondet ( \Nondet \Zset{ \lvar \in Val }{\lvar \neq 1}{\Idle \SSeq \Magic})
          \nondet (\Idle \SSeq \Magic)
	}
        
        \trans{=}{monotonicity of sequential composition, $\Magic$ is the greatest element}
        \step{
          \Idle \SSeq update(x, 1) \SSeq \Idle
	}
\end{derivation}

\subsection{Termination and fairness}\label{s:termination}

In the context of concurrency, termination of even the most obviously terminating program, 
such as ``$x := 0$'',
relies on the program being scheduled long enough to complete.
If the program is interrupted by its environment forever, it does not terminate.
For example, the program
\begin{displaymath}
  x \mathbin{:=} 1 \Seq ((\While x \neq 0 \Do \Skip) \parallel x \mathbin{:=} 0)
\end{displaymath}
will not terminate unless the right-hand operand in the parallel composition is given a chance to set $x$ to zero.

The command $\Term$ is defined as $(\cpstepd \nondet \cestepbotd)\FinIter \SSeq \cestepbotd\InfIter$
which, although it only allows a finite number of program steps, 
does not preclude its environment taking over forever.
Hence $\Term \refsto c$ is a slightly weak form of termination 
because it only requires termination of $c$ provided its environment doesn't interrupt it forever.

Fair execution of a program rules out its environment interrupting it forever.
The program that allows all behaviours except being interrupted forever can be expressed as
\begin{eqnarray*}
  \Fair & \sdefs & \cestepbotd\FinIter \SSeq (\cpstepd \SSeq \cestepbotd\FinIter)\InfIter
\end{eqnarray*}
and fair execution of a command $c$ can be represented by
\(
  \Fair \together c, 
\)
where \strictconjunction\ ensures any aborting behaviour of $c$ is preserved.
A stronger form of termination can be expressed as
\begin{eqnarray*}
  \Fairterm & \sdefs & (\cpstepd \nondet \cestepbotd)\FinIter ~~=~~ \cestepbotd\FinIter \SSeq (\cpstepd \SSeq \cestepbotd\FinIter)\FinIter
\end{eqnarray*}
and requires a finite number of both program and environment steps.
Interestingly,
\begin{eqnarray*}
  \Fairterm & = & \Fair \together \Term
\end{eqnarray*}
and hence if $\Term \refsto c$, by monotonicity,
\(
  \Fair \together \Term \refsto \Fair \together c,
\)
that is,
\(
  \Fairterm \refsto \Fair \together c,
\)
and hence weak termination of $c$ ensures strong termination if the execution of $c$ is fair.

\subsection{End-to-end specifications}\label{s:end-to-end}

In the rely-guarantee paradigm
postconditions are end-to-end and allow only a finite number of program steps
\cite{Jones81d,Jones83b}.  
A specification command, $\Sspec{}{}{q}$, satisfies a binary relation $q$ between its initial and final states \cite{TSS}.
\begin{eqnarray}
  \Sspec{}{}{q} & \sdefs & \Nondet \Zset{ \sigma \in \Sigma }{}{ (\cgd{\{\sigma\}} \SSeq \Term \SSeq \cgd{\{\sigma' | (\sigma,\sigma') \in q\}}) }
	\label{defn-spost}
\end{eqnarray}
It may take any sequence of steps that collectively establish $q$.  
Sequences of steps that do not satisfy $q$ are infeasible because of the final $\cgdd$ step.
Note that this specification construct contains all possible implementations.
In the literature it is often the case
that $q$ would initially be an atomic step, and this atomicity is then
successively ``split'' into an implementation
\cite{BackAtomicityRefinement}.
Brookes \cite{Brookes-full-abstraction} and Dingel \cite{Dingel02}
use a notion of program equivalence modulo finite stuttering and mumbling.%
\footnote{A trace $t_1$ is mumbling equivalent to a trace $t_2$ if 
$t_1$ contains a sub-trace $\Tra{\sigma_i}{~\Pstepd~\sigma_{i+1}~\Pstepd~\sigma_{i+2}}$
and
$t_2$ is the same as $t_1$ but with the sub-trace replaced by $\Tra{\sigma_i}{~\Pstepd~\sigma_{i+2}}$,
i.e.\ two consecutive program steps can be replaced by a single step with the same overall effect.}
The set of traces of a specification command is closed under finite stuttering and mumbling,
and hence we can safely use the stronger notion of program equivalence as equality on sets of traces.
For example, it is straightforward to show that a specification is closed under finite stuttering
because $\Sspec{}{}{q} \parallel \Idle = \Sspec{}{}{q}$.
Furthermore, (\ref{defn-spost}) allows a trace with no program steps 
for initial states $\sigma$, such that $(\sigma,\sigma) \in q$,
which isn't catered for by mumbling equivalence,
however, it would be straightforward to generalise mumbling equivalence to allow this.

End-to-end specifications as standalone commands are in general unimplementable.
The following program specifies the push of a value $v$ onto a stack $s$.
Here relations are expressed in predicative style, in which before state
values of a variable $s$ are represented by $s$ and final state values by
$s'$ (as in Z \cite{Spivey92,Hayes93,Woodcock96} or TLA \cite{Lamport03}).
We let $\seqT{v}$ represent the singleton sequence containing $v$.
\[
	\Post{s' = \seqT{v} \cat s}
\]
There are no refinements of this specification to code because code
does not constrain the environment to prevent modifications to $s$ after it
finishes its program steps.  To refine this specification to code requires
an assumption about the environment, in particular, that the environment
does not modify $s$.  In addition, this specification says nothing about
which (other) variables may be modified.  These parts of rely-guarantee
reasoning are introduced in the next section.

\subsection{Rely and guarantee constructs}\label{s:rely}\label{s:guarantee}

Because the model explicitly includes both environment and program steps,
defining specification constructs to handle rely and guarantee conditions is straightforward.

\subsubsection{Restricting the program: guarantees}

The command $\pguardg$ restricts its program steps to satisfy the relation $g$ 
and leaves its environment steps unconstrained.  
When combined with a command $c$ using \strictconjunction, $\pguard{g} \together c$,
it allows only program steps of $c$ that respect $g$
(but does not prevent the combination from aborting if $c$ can abort
after a sequence of program steps satisfying $g$).
This command encodes the \emph{guarantee} condition of Jones~\cite{Jones83b}.
\begin{eqnarray}
	\pguard{g}
	&\sdef&
	\Om{(\cpstep{g} \choice \cestepbotd)}
\end{eqnarray}
Guarantees satisfy useful properties, such as the following.
\begin{eqnarray*}
  \pguard{g_1} \together \pguard{g_2} & = & \pguard{g_1 \int g_2} \\
  \pguard{g_1} \parallel \pguard{g_2} & = & \pguard{g_1 \union g_2}   
\end{eqnarray*}
A frame $x$ on a command $c$, written $\Frame{x}{c}$,
constrains $c$ to only modify the set of variables $x$;
it can be defined as a guarantee to never modify variables outside
$x$.
The relation $\id(x)$ is the identity relation on the set of variables $x$ 
(i.e.\ the relation $\{ (\sigma,\sigma') | x \dres \sigma = x \dres \sigma' \}$,
where $x \dres \sigma$ is the state $\sigma$ with its domain restricted to $x$)
and $\overline{x}$ is the complement of the set of variables $x$.
\begin{eqnarray}
	\Frame{x}{c} & \sdefs & \pguard{\idbarx} \together c
\end{eqnarray}

As an example, recall pushing a value $v$ onto a stack $s$.  Typically an implementation is expected to modify $s$ and only $s$.  
\[
	\Post{s' = \seqT{v} \cat s}
	\together 
	\pguard{\id(\bar{s})}
\]
which is equal to the abbreviation $\Frame{s}{
	\Post{s' = \seqT{v} \cat s}
}$.
Now any implementation of this specification may modify $s$ only (in
addition to any local variables, described below).  However, it is still
infeasible for the same reasons that a stand-alone end-to-end specification
is infeasible; this problem will be rectified by the introduction of assumptions
about the environment, described below.

Note that the program $\Frame{s}{\Post{s' = \seqT{v} \cat s}}$, is not the same as
\begin{math}
	\Post{s' = \seqT{v} \cat s
	\land
	\id(\bar{s})}
\end{math},
because the latter may have intermediate program steps that modify variables other than $s$. 
Also, the latter specification constrains the environment steps in a way that the first command does not. 
In the first command, the environment may arbitrarily modify variables other than $s$, 
allowing the program to terminate in a state where variables other than $s$ have been changed by the environment. 
For the second program, every trace is constrained to ensure that all variables other than $s$ 
have a final value equal to their initial value,
although it makes no constraints on any of the intermediate values of these variables.
This difference is one of the most prominent between specification/refinement in sequential and parallel contexts.

\subsubsection{Assumptions about the environment: rely}

A command $\eguard{r}$ restricts every
environment step to satisfy the relation $r$ or abort,
while allowing any program steps.
A command $\envc{r}$ aborts if the environment does not respect $r$.
The commands
$\eguard{r}$
and 
$\envcr$
for environment steps
perform similar roles in the language to those of guards and assertions on
(pre-)states in
the sequential refinement calculus.

The intention is for $\envcr$ to be composed with other
specification commands using \strictconjunction:
any $d$ such that $c \together \envcr \refsto d$ need refine $c$ only
when the environment respects $r$, as under any other circumstance the
behaviour of $d$ is unconstrained.
This command encodes the \emph{rely} condition of Jones~\cite{Jones83b}.
\begin{eqnarray}
	\eguard{r}
	&\sdef&
	\Om{(\cpstepd \choice \cestepbotr)}
	\\
	\envc{r}
	&\sdef&
	\eguard{r} \SSeq (\Nil \choice \cestep{\notr} \SSeq \Abort)
\end{eqnarray}
Environment guards and assumptions satisfy the following properties.
\begin{eqnarray*}
  \eguard{r_1} \together \eguard{r_2} & = & \eguard{r_1 \int r_2} \\
  \envc{r_1} \together \envc{r_2} & = & \envc{r_1 \int r_2}
\end{eqnarray*}

\newcommand{\tunion}{~\union~}

\newcommand{\xdinc}{p}
\newcommand{\xpp}{x \scalebox{0.9}{\sf{++}}}

More concretely, assume command $\xdinc$ is a sequence of two atomic
steps that initialises $x$ to 0 and then increments $x$, for
instance, in a language with atomic initialisation and increments, $x
\asgn 0 \Seq \xpp$.  The traces of this command include all
terminated and non-terminated traces with exactly two program steps
interleaved with arbitrary environment steps:
\begin{displaymath}
  \xdinc ~~=~~
  \cestepbotd\InfIter \SSeq \cpstep{x'=0} \SSeq
  \cestepbotd\InfIter \SSeq \cpstep{x'=x+1} \SSeq
  \cestepbotd\InfIter~.
\end{displaymath}
In general we may not state anything about the final value of $x$ in
the terminating traces of $\xdinc$ since the environment steps may
modify $x$.  If we guard the environment to restrict to steps that do
not modify $x$, however, we obtain the following types of traces.
\begin{displaymath}
  \xdinc \together \eguard{x' = x} ~~=~~ 
  \cestepbot{x'=x}\InfIter \SSeq \cpstep{x'=0} \SSeq
  \cestepbot{x'=x}\InfIter \SSeq \cpstep{x'=x+1} \SSeq
  \cestepbot{x'=x}\InfIter
\end{displaymath}
In this new program, every environment step leaves $x$ unmodified or
aborts, and so we can deduce that $x$ equals $1$ in the final state of
any terminating behaviour.

In practice, restricting the environment is too strong for
specifications.  The use of a rely command, however, allows all
possible behaviours of the environment.
\begin{eqnarray*}
  \xdinc \together \envc{x' = x} & ~~=~~ &  
  (\cestepbot{x'=x}\InfIter \SSeq \cpstep{x'=0} \SSeq
  \cestepbot{x'=x}\InfIter \SSeq \cpstep{x'=x+1} \SSeq
  \cestepbot{x'=x}\InfIter
  )
  \sqcap {}\\
  && 
  (\cestepbot{x'=x}\InfIter \SSeq \cestepbot{x'\neq x} \SSeq \Abort)
  \sqcap {}\\
  && 
  (\cestepbot{x'=x}\InfIter \SSeq \cpstep{x'=0} \SSeq
  \cestepbot{x'=x}\InfIter \SSeq \cestepbot{x'\neq x} \SSeq \Abort)
  \sqcap {}\\
  &&
  (\cestepbot{x'=x}\InfIter \SSeq \cpstep{x'=0} \SSeq
  \cestepbot{x'=x}\InfIter \SSeq \cpstep{x'=x+1} \SSeq
  \cestepbot{x'=x}\InfIter \SSeq \cestepbot{x'\neq x} \SSeq \Abort)
\end{eqnarray*}
The traces of $\xdinc \together \envc{x' = x}$ therefore contain the
traces where the environment does not modify $x$, establishing $x$
equals $1$ in traces that terminate, as well as traces where the
environment does not respect the rely condition, at which point any
behaviour is possible.
When used in a specification, the rely command therefore absolves the
implementation of establishing anything when the condition is not
respected by its environment; similarly if a sequential program begins execution in a
state that does not satisfy its precondition it need not establish the
postcondition.

We can complete the specification of push onto a stack
in an environment
that does not modify $s$ by adding the appropriate rely.
\begin{equation}
\label{seq-push}
	s: \Post{s' = \seqT{v} \cat s}
	\together 
	\envc{\id(s)}
\end{equation}
This is now feasible: it may be implemented by the typical two-line linked
list code (after data refining the representation).
Traces of the implementation in which the stack is not
modified by its environment must satisfy the end-to-end relation $s' = \seqT{v} \cat s$, while
traces of the implementation in which the stack is modified by the environment are always
legal, since the specification aborts under such circumstances.

The rely $\id(s)$ is a strong assumption, essentially requiring that $s$
is local to the program (although its value may be read by the
environment).  It is often the case that the program and environment
communicate via shared variables, in which case a weaker rely is used.
For instance, if $i$ is a shared index into an array for concurrent search,
an implementation may allow each program to assume that $i$ only increases,
never decreases (given by the command $\envc{i' \geq i}$).
The weakest assumption is $\envc{\emptyrel}$, making no assumptions about
what the environment may modify.  This is usually too weak for feasible
implementations, for instance, $\Post{q} \together \envc{\emptyrel} =
\Post{q}$.
More fully developed refinements in a rely-guarantee context appear in
\cite[Sect 2.3]{FACJexSEFM-14} and \cite[Sect. 11]{HayesJonesColvin14TR}.

Traditionally rely-guarantee reasoning is formulated
as a quintuple statement
$\quintprgqc$, which states that, assuming that the initial state
satisfies $p$ and each step of the environment
satisfies $r$, then $c$ terminates and establishes $q$ and furthermore
every program step of $c$ satisfies the guarantee $g$ \cite{Jones83b}.
A rely-guarantee quintuple can be rewritten as a refinement statement in our language as follows.
\begin{eqnarray}
\label{rgquint-refsto}%\quintprgqc \sdef 
		\assertp \SPostq \together \envcr \together \pguard{g}
		\refsto c
\end{eqnarray}
Note the separation of concerns 
into pre, post, rely and guarantee conditions
allowing separate reasoning about each component.

\subsubsection{Further properties of environments}

Because environment guards and relies are novel to our approach
we explore some useful properties of these commands.
\begin{lemma}[eguard-env]\label{eguard-env}
For a binary relation $r$,~~~
\(
  \eguard{r} \together \envc{r} = \eguard{r}~.
\)
\end{lemma}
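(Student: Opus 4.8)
The plan is to rewrite the left-hand side as a nondeterministic choice whose second branch is absorbed into $\eguard{r}$. First I would expand $\envc{r}$ using its definition, the identity $\eguard{r} \SSeq \Nil = \eguard{r}$, and left-distribution of sequential composition over binary choice, giving
\[
  \envc{r} ~=~ \eguard{r} \nondet (\eguard{r} \SSeq \cestep{\notr} \SSeq \Abort).
\]
Writing $c_2$ for the second branch $\eguard{r} \SSeq \cestep{\notr} \SSeq \Abort$, I would then use that $\together$ distributes over $\nondet$ (which follows directly from~(\ref{weak-conjunction}), since $\int$, $aborting$ and $abort\_complete$ each distribute through $\union$) together with idempotence of $\together$ to obtain
\[
  \eguard{r} \together \envc{r} ~=~ (\eguard{r} \together \eguard{r}) \nondet (\eguard{r} \together c_2) ~=~ \eguard{r} \nondet (\eguard{r} \together c_2).
\]
Because $\nondet$ is union and refinement is reverse inclusion, $a \nondet b = a$ holds exactly when $b \subseteq a$, so it remains only to establish the single containment $\eguard{r} \together c_2 \subseteq \eguard{r}$.

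For that I would expand $\together$ via~(\ref{weak-conjunction}) into $(\eguard{r} \int c_2) \union abort\_first(\eguard{r},c_2) \union abort\_first(c_2,\eguard{r})$ and argue that the two $abort\_first$ terms are empty, so that the whole reduces to $\eguard{r} \int c_2$, which is trivially a subset of $\eguard{r}$. The first $abort\_first$ term vanishes because $\eguard{r} = \Om{(\cpstepd \nondet \cestepbotr)}$ has no program-abort traces: each of its steps is a program step, an $r$-respecting environment step, or the \emph{environment} abort $\Estep{\botstate}$ contributed by $\cestepbotr$, and both sequential composition and iteration preserve the absence of a $\Pstep{\botstate}$ step; hence $aborting(\eguard{r}) = \emptyset$ and $abort\_first(\eguard{r},c_2) = abort\_complete(\emptyset) = \emptyset$. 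The second term vanishes because every program-abort trace of $c_2$ is produced by the trailing $\Abort$ and so contains the non-aborting environment step of $\cestep{\notr}$, i.e.\ a step $\sigma \Estepd \sigma'$ with $(\ssp) \in \notr$, strictly before its final $\Pstep{\botstate}$; deleting that last step (as $aborting$ does) leaves a trace still containing this $\notr$-step, whereas no trace of $\eguard{r}$ contains such a step (its environment steps either satisfy $r$, and $r \int \notr = \emptyset$, or are the abort $\Estep{\botstate}$). Thus $aborting(c_2) \int \eguard{r} = \emptyset$ and $abort\_first(c_2,\eguard{r}) = \emptyset$ as well.

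The hard part will be this second paragraph, and in particular the containment $aborting(c_2) \int \eguard{r} = \emptyset$: this is exactly where the model's deliberate asymmetry between program aborts and environment aborts is essential, and it rests on the trace-level observation that $c_2$ can only reach a program abort after first taking an environment step that $\eguard{r}$ forbids. The remaining ingredients --- unfolding $\envc{r}$, distributivity, and idempotence --- are routine, the only one worth checking explicitly being distributivity of $\together$ over $\nondet$, which I would verify in a line from the definitions in Figure~\ref{fig:primitive-operators}.
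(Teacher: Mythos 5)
Your proof is correct and follows the paper's own decomposition exactly up to the last step: expand $\envc{r}$, distribute $\together$ over $\nondet$, apply idempotence of $\together$, and reduce the goal to the single absorption $\eguard{r} \refsto \eguard{r} \together (\eguard{r}\SSeq\cestep{\notr}\SSeq\Abort)$. Where you genuinely diverge is in how that remaining refinement is discharged: the paper simply asserts it ``can be shown by induction'' (presumably over the iteration structure of $\eguard{r} = \Om{(\cpstepd\nondet\cestepbotr)}$), whereas you argue directly at the level of trace sets, expanding $\together$ via (\ref{weak-conjunction}) and showing that both $abort\_first$ components are empty --- $\eguard{r}$ contains no trace with a $\Pstep{\botstate}$ step, and every trace in $aborting(\eguard{r}\SSeq\cestep{\notr}\SSeq\Abort)$ retains an environment step in $\notr$ that no trace of $\eguard{r}$ can contain --- so the weak conjunction collapses to the intersection $\eguard{r} \int c_2$ and the containment is immediate. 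Your route is arguably more informative, since it isolates exactly where the asymmetry between $\Pstep{\botstate}$ and $\Estep{\botstate}$ does the work; its one soft spot is the bare assertion that iteration preserves the absence of program-abort steps. Because $\Om{c}$ is a least fixed point with respect to refinement, hence a greatest fixed point with respect to trace inclusion, this needs a short induction on the position of a putative $\Pstep{\botstate}$ step, using the fact that each terminated iteration of the body $\cpstepd\nondet\cestepbotr$ contributes at least one step. That is essentially the same induction the paper gestures at, so the two proofs ultimately rest on the same underlying fact, with yours making the surrounding bookkeeping explicit.
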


\begin{proof}
The proof expands the definition of $\envc{r}$ and simplifies.
\[
  \eguard{r} \together \envc{r}
 \Equals
  \eguard{r} \together (\eguard{r} \nondet \eguard{r} \SSeq \cestep{\bar{r}} \SSeq \Abort)
 \Equals
  (\eguard{r} \together \eguard{r}) \nondet (\eguard{r} \together \eguard{r} \SSeq \cestep{\bar{r}} \SSeq \Abort)
 \Equals
  \eguard{r} \nondet (\eguard{r} \together \eguard{r} \SSeq \cestep{\bar{r}} \SSeq \Abort)
 \Equals
  \eguard{r}
\]
For the last step one needs to show 
$\eguard{r} ~\refsto~ (\eguard{r} \together \eguard{r} \SSeq \cestep{\bar{r}} \SSeq \Abort)$,
which can be shown by induction.
\end{proof}

\begin{lemma}\label{eguardr-refsto}
For a binary relation $r$, and commands $c$ and $d$,~~
\begin{eqnarray*}
  \eguard{r} \together c \refsto \eguard{r} \together d & ~~\iff~~ & c \refsto \eguard{r} \together d~.
\end{eqnarray*}
\end{lemma}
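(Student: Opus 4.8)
The plan is to exploit the fact that the map $x \mapsto \eguard{r} \together x$ is a closure operator on the lattice of commands, so that the biconditional becomes the standard characterisation of such operators. Three algebraic facts about $\together$ do most of the work: it is monotone in each argument, it is associative, commutative and idempotent (so $\eguard{r} \together (\eguard{r} \together d) = \eguard{r} \together d$), and it has identity $\Chaos$. The one extra ingredient is that $\eguard{r}$ refines $\Chaos$, i.e.\ $\Chaos \refsto \eguard{r}$: since $r \subseteq \universalrel$ we have $\cestepbotd \refsto \cestepbot{r}$ (restricting environment steps only removes traces), hence $\cpstepd \nondet \cestepbotd \refsto \cpstepd \nondet \cestepbot{r}$ by monotonicity of $\nondet$, and monotonicity of $\Om{(\cdot)}$ gives $\Chaos = \Om{(\cpstepd \nondet \cestepbotd)} \refsto \Om{(\cpstepd \nondet \cestepbot{r})} = \eguard{r}$. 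Combining $\Chaos \refsto \eguard{r}$ with monotonicity of $\together$ and with $\Chaos$ being its identity yields the inflationary law $c = \Chaos \together c \refsto \eguard{r} \together c$, valid for every command $c$.

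For the ($\Leftarrow$) direction I would assume $c \refsto \eguard{r} \together d$ and apply monotonicity of $\together$ in its left argument to obtain $\eguard{r} \together c \refsto \eguard{r} \together (\eguard{r} \together d)$; associativity and idempotence of $\together$ collapse the right-hand side to $\eguard{r} \together d$, delivering the required refinement.

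For the ($\Rightarrow$) direction I would assume $\eguard{r} \together c \refsto \eguard{r} \together d$ and chain it after the inflationary law, giving $c \refsto \eguard{r} \together c \refsto \eguard{r} \together d$, whence $c \refsto \eguard{r} \together d$ by transitivity of $\refsto$.

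The only genuine content, and thus the step to get right, is the inflationary law $c \refsto \eguard{r} \together c$; everything else is bookkeeping with the monoid and lattice laws of $\together$. In turn this reduces to $\Chaos \refsto \eguard{r}$, so the point to be careful about is the \emph{direction} of refinement: because constraining environment steps to $r$ deletes traces, $\eguard{r}$ is a refinement of the maximally permissive identity $\Chaos$ (rather than the reverse). It is precisely this orientation that makes $\eguard{r} \together (\cdot)$ a closure operator rather than a kernel operator, and hence it is what makes the forward implication go through.
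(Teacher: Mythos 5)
Your proof is correct and follows essentially the same route as the paper's: the forward direction via $c = \Chaos \together c \refsto \eguard{r} \together c$ using $\Chaos \refsto \eguard{r}$, and the reverse direction via monotonicity plus idempotence of $\together$. The only difference is cosmetic — you package it as a closure-operator argument and spell out why $\Chaos \refsto \eguard{r}$ (which the paper asserts without proof), and that justification is sound.
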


\begin{proof}
Assuming the left side, the right follows by the reasoning below because $\Chaos \refsto \eguard{r}$.
\[
  c ~=~ \Chaos \together c ~\refsto~ \eguard{r} \together c ~\refsto~ \eguard{r} \together d
\]
The proof from right to left is as follows.
\[
  c \refsto \eguard{r} \together d
 \Implies
  \eguard{r} \together c \refsto \eguard{r} \together \eguard{r} \together d
 \IFF
  \eguard{r} \together c \refsto \eguard{r} \together d
\]
\end{proof}

The literature on rely-guarantee often makes use of a refinement operator
$c \refsto_r d$
requiring $d$ to refine $c$ provided all environment steps satisfy $r$ \cite{CoJo07}.
More formally, the set of traces of $d$ restricted to those with all environment steps
satisfying $r$ is contained in the traces of $c$.
The set of traces of $d$ restricted to those with all environment steps satisfying $r$
corresponds to $\eguard{r} \together d$,
and hence the refinement $c \refsto_r d$ is equivalent to
\begin{eqnarray}\label{c-to-eguardr-d}
  c \refsto \eguard{r} \together d~.
\end{eqnarray}
Hence by making use of a richer set of commands,
one can avoid the need to introduce the family of refinement relations ``$\refsto_r$'' indexed by $r$.
Property (\ref{c-to-eguardr-d}) is subtly different to the refinement 
\begin{eqnarray}\label{envr-c-to-d}
  \envc{r} \together c \refsto d
\end{eqnarray}
but only for commands $c$ that restrict environment steps.
In general, (\ref{envr-c-to-d}) implies (\ref{c-to-eguardr-d}) but not vice versa.
To see that the reverse does not hold take $c$ to be $\eguard{r}$ and $d$ to be $\Chaos$.
Refinement (\ref{c-to-eguardr-d}) reduces to $\eguard{r} \refsto \eguard{r} \together \Chaos = \eguard{r}$,
which holds,
but (\ref{envr-c-to-d}) reduces to $\envc{r} \together \eguard{r} \refsto \Chaos$,
which does not hold in general because the left side is equivalent to $\eguard{r}$
by Lemma~\ref{eguard-env}.

\begin{theorem}
For a binary relation $r$, and commands $c$ and $d$,~~
\(
  \envc{r} \together c \refsto d ~~\implies~~ c \refsto \eguard{r} \together d~.
\)
\end{theorem}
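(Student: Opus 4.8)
The plan is to reduce the goal to an equivalent refinement between weak conjunctions guarded by $\eguard{r}$, and then discharge it by combining the hypothesis with Lemma~\ref{eguard-env}. Since the discussion preceding the theorem already observes that (\ref{envr-c-to-d}) implies (\ref{c-to-eguardr-d}), the theorem is precisely the formal statement of that implication, and I expect the whole argument to be a short piece of algebraic reasoning using only the two preceding lemmas together with the fact that $\together$ is monotonic, associative and commutative.

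First I would rewrite the conclusion $c \refsto \eguard{r} \together d$ using the biconditional of Lemma~\ref{eguardr-refsto}, which states that this is equivalent to $\eguard{r} \together c \refsto \eguard{r} \together d$. It therefore suffices to establish the latter refinement from the hypothesis $\envc{r} \together c \refsto d$. Next I would apply monotonicity of $\together$ to the hypothesis, conjoining $\eguard{r}$ on each side, to obtain
\[
  \eguard{r} \together \envc{r} \together c ~\refsto~ \eguard{r} \together d.
\]
Using commutativity and associativity of $\together$ to group the first two factors, and then Lemma~\ref{eguard-env}, which gives $\eguard{r} \together \envc{r} = \eguard{r}$, the left-hand side collapses to $\eguard{r} \together c$, yielding exactly $\eguard{r} \together c \refsto \eguard{r} \together d$, as required.

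There is no substantial obstacle once the two supporting lemmas are in hand: the entire proof is the chain above. The only points that need care are the direction in which Lemma~\ref{eguardr-refsto} is used --- it is the biconditional that converts the conclusion into a statement about $\eguard{r} \together c$ --- and the simplification via Lemma~\ref{eguard-env} (itself established by induction), which absorbs the assumption command $\envc{r}$ into the guard $\eguard{r}$. Intuitively, conjoining with $\eguard{r}$ discards precisely those traces of $d$ on which the environment violates $r$, namely the traces for which $\envc{r} \together c$ aborts and hence imposes no constraint, so the weak-conjunctive hypothesis is exactly strong enough to deliver the guarded refinement.
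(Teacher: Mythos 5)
Your proof is correct and is essentially identical to the paper's: both apply monotonicity of weak conjunction to conjoin $\eguard{r}$ to the hypothesis, collapse $\eguard{r} \together \envc{r}$ to $\eguard{r}$ via Lemma~\ref{eguard-env}, and convert the resulting guarded refinement to the conclusion via Lemma~\ref{eguardr-refsto}. The only difference is presentational (you work backward from the goal, the paper chains forward from the hypothesis).
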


\begin{proof}
\[
  \envc{r} \together c \refsto d
 \Implies*[as \strictconjunction\ is monotone]
  \eguard{r} \together \envc{r} \together c \refsto \eguard{r} \together d
 \IFF*[by Lemma~\ref{eguard-env}]
  \eguard{r} \together c \refsto \eguard{r} \together d
 \IFF*[by Lemma~\ref{eguardr-refsto}]
  c \refsto \eguard{r} \together d
\]
\end{proof}

This law illustrates the connection between environment restriction and
relies with respect to refinement, much like the connection between guards
and assertions for preconditions from the sequential refinement calculus.  
Given (normal) programs $c$ and $d$ that do not restrict the environment, 
making a rely assumption in the specification $c$ is equivalent to restricting
the environment in $d$ when showing refinement.

\subsection{Local variable blocks}\label{s:local-variables}

In the context of interference from concurrent programs,
a local variable cannot be affected by any programs
external to the block in which it is declared and 
the body of the local variable block cannot reference%
\footnote{We assume there is no aliasing of variable names 
otherwise restrictions on aliasing need to be included.}
(and hence modify) 
any more global variable with the same name.
The definition of a local variable block can be built from the language primitives defined earlier,
including the operator $c \hide x$, 
which has traces that are the same as $c$ except that $x$ is unrestricted
\refeqn{hide}.

The command $(\Local{x}{c})$ localises $x$ within $c$ and hence 
it restricts all environment steps of $c$ to not modify $x$,
i.e.\ they satisfy the relation $\id(x)$.
The program steps of the local variable block do not change the value of a
non-local occurrence of the variable $x$, i.e.\ $\pguard{\id(x)}$.
For a local variable block $(\variable{x}{c})$, 
$c$ is run in an environment in which $x$ is local.
To allow for allocation and deallocation of $x$,
the block may idle before and after execution (implicitly such idle steps
may modify the local $x$).
\begin{eqnarray*}
  \Local{x}{c} & \sdefs & \pguard{\id(x)} \together (c \together \eguard{\id(x)}) \hide x \\
  \variable{x}{c} & \sdefs & \Idle \SSeq (\Local{x}{c}) \SSeq \Idle 
\end{eqnarray*}
From this definition it follows that $(\variable{x}{c}) = \pguard{\id(x)} \together (\variable{x}{c})$ and
any refinement of $c$ within the block can assume there is no interference on $x$.
If the body of a local variable block guarantees the relation $g$ on every program step,
the block itself respects both $\id(x)$ and $g \hide x$,
where $(\sigma, \sigma') \in (g \hide x) \iff 
(\exists \sigma_1, \sigma_1' \spot x \ndres \sigma = x \ndres \sigma_1 \land x \ndres \sigma' = x \ndres \sigma_1' \land (\sigma_1,\sigma_1') \in g)$.
\begin{eqnarray*}
  (\variable{x}{\pguard{g} \together c}) & = & \pguard{g \hide x \int \id(x)} \together (\variable{x}{\pguard{g} \together c})
\end{eqnarray*}
An assumption about environment steps satisfying $r$ can be propagated into a local variable block
by ignoring the global variable $x$ (if any) and assuming the local $x$ cannot be changed by the environment.
\begin{eqnarray*}
  \envc{r} \together (\variable{x}{c}) & \refsto & (\variable{x}{\envc{(r \hide x) \int \id(x)} \together c})
\end{eqnarray*}

\newcommand{\pvariable}[2]{(\variable{#1}{#2})}
\newcommand{\pLocal}[2]{(\Local{#1}{#2})}

Consider the variable declaration command $d \sdef \pvariable{x}{c}$ in
an environment $e$, i.e.\ $d \pl e$.  
The intention is that $c$ may modify $x$, but that $e$
may not modify $d$'s $x$.  The former intention is implicitly
allowed by $d$.  The latter constraint is enforced by the environment guard
$\id(x)$ in $\pLocal{x}{c}$.  Typical approaches to local variables require
a syntactic constraint on $e$ that it never modify $x$, but this is
problematic if $e$ declares its own copy of $x$, for instance, if $d$ and
$e$ are instances of the same command.  In that case a further implicit or
explicit renaming scheme is introduced to remove the ambiguity.  Such
schemes create significant overhead if enforced fully (for instance,
environments in Plotkin-style semantics).  The approach taken here avoids
such syntactic concerns by ``hiding'' $c$'s modifications of $x$ from $e$.
This is given by $c \hide x$ in $\pLocal{x}{c}$, which removes the
modifications of $x$ in the program steps of $c$.  These modifications are
then replaced by a guarantee that $d$ does not modify $x$, i.e.\ $\pguard{\id(x)}$, as expected: if
$x$ \emph{is} (re)declared in $e$, $d$ does not modify it, since all
references to $x$ in $d$ are local.

More concretely, the traces of $\pvariable{x}{c_1} \pl \pvariable{x}{c_2}$ do not
interfere with each other on $x$.  The declaration of $x$ as local to each
program is a semantic restriction, rather than a syntactic restriction.

\subsection{Atomic abstract specifications and linearisability}\label{s:linearisability}

\newcommand{\PL}[2]{{\underset{#1}{\pl} #2}}

\renewcommand{\arelidle}[1]{\atomicrel{#1}}

We have previously discussed end-to-end specifications, which
allow implementations that comprise multiple steps that modify (potentially shared)
data.  In concurrent contexts in which many programs modify some central shared
data, end-to-end specifications are not appropriate: either a multi-stage
update to shared data may interfere with the other programs, or other
programs may modify the shared data after completing an operation,
falsifying the end-to-end requirement.  For situations where multiple
programs are expected to concurrently make calls that access/update shared data, 
some form of locking or other mechanism is required to ensure the data remains
consistent.  One of the most widely known correctness criterion for shared
data with atomic actions is linearisability \cite{linearisability}.
We demonstrate below how the wide-spectrum language is applicable in this
widely-used context.

To describe atomic execution it useful to define an abstract \emph{atomic
step} command $\arelidler$, which performs exactly one atomic step that
satisfies $r$, in addition to which it may idle before and after.
\begin{eqnarray}
  \arelidler & \sdefs & \Idle \SSeq \cpstep{r} \SSeq \Idle  \label{atomicrelidle}
\end{eqnarray}

\subsubsection{Concurrent objects}\label{s:objects}

The concept of a concurrent object is shared data $x$ which may be modified
by programs (indexed over the finite set $P$) performing operations (drawn from
set $OP$) concurrently.  This can be abstractly stated in the wide-spectrum
language using local variables and generalising binary parallel
composition.
\begin{equation}
\label{concobj}
	\variable{x}{\PL{p \in P}{\left(\Choice{}{op \in OP}{op} \right)\InfIter}}
\end{equation}
Each program repeatedly executes operations on $x$.
The interleaving of multiple modifications to $s$ results in undesirable
interference.  Traditionally such interference is avoided through the
introduction of locks, although this approach has drawbacks associated with
deadlock and unreliable processors.  Another approach is to use
atomic hardware primitives such as compare-and-swap (CAS) and retry loops
to avoid deadlock.  In either approach, the intuition is that the
operations appear to take effect atomically from the perspective of 
other programs.  This notion of correctness is formalised by Herlihy \&
Wing's definition of \emph{linearisability}.  Although the original
definition is in terms of histories of operation invocation and response
events (rather than traces of states), it is often equivalently recast in terms
of atomic specifications on abstract data, which are data refined to
(usually pointer-based) implementations in code.

\subsubsection{Atomic operations on concurrent objects}

An operation to perform an atomic step satisfying the relation $q$ on a data structure $x$ has 
the following form, which includes an assumption $p$ of the initial state and
a rely $r$ of the environment.
\begin{eqnarray}
	\assertp \SSeq~ \Frame{x}{\arelidle{q}} \together \envc{r}
	\label{lin-op-spec}
\end{eqnarray}
Note that finite stuttering is allowed in addition to the single $q$-step.
Any guarantee condition can be included as a conjunct within $q$.
As an example, below we define
$push$ and $pop$ operations on an abstract stack 
represented as a sequence $s$.
Recall that sequence concatenation is represented by the operator ``$\cat$'' and
$\seqT{v}$ represents the singleton sequence containing $v$.
\begin{eqnarray}
	push(s, v) 
	&\sdef&
	\Frame{s}{\arelidle{s' = \seqT{v} \cat s}} \together \envc{v' = v}
	\label{lin-push}
	\\
	pop(s, v) 
	&\sdef&
	\Frame{s,v}{\arelidle{
		s = \seqT{v'} \cat s' 
		\lor
		(s = s' = \eseq \land v' = \popempty)
	}} \together
	\envc{v' = v}
\end{eqnarray}
A call $push(s,v)$ atomically adds value $v$ to the top of the
stack.  An implementation of $push(s,v)$ may make any
finite number of stuttering steps before and after the main atomic step that
modifies $s$; this is important for implementations that may use a
different representation.  The $push$ operation
need not assume anything about the initial state
and assumes that the environment does not modify the parameter $v$
(and that the type of $s$ remains a stack, which we leave implicit). 
A call $pop(s, v)$ returns the popped
value in $v$, or the special value $\popempty$ if the stack is 
empty at the time the atomic step of the pop occurs. 

Encapsulating the stack as a concurrent object gives the following instance of
\refeqn{concobj}.
\[
	Stack \sdef 
	\variable{s}{\PL{p \in P}{ 
		\pvariable{v}{( push(s, v) \choice pop(s, v) )\InfIter}}}
\]
In this specification multiple programs concurrently modify $s$ by
atomically pushing to or popping from it, as recorded in the local variable $v$ of each process.

Compare the specification of atomic push on a stack in \refeqn{lin-push}
to that using an end-to-end specification in \refeqn{seq-push}.
In the former no assumption is made about $s$ (except that implicitly it
respects its type), while in the latter $s$ is assumed not to be changed by the environment.  The
former is a weaker assumption about the environment, and hence more
desirable in general, but with the trade-off that the push itself must be
completed in exactly one program step.  The specification in
\refeqn{seq-push} allows multiple steps to achieve the specification since
it may assume no interference on $s$.
A further point of difference is that a guarantee is redundant for the atomic specification
\refeqn{lin-push} since it takes only one step.

\section{Progress properties}\label{s:progress}

In addition to ``safety'' properties such as establishing an end-to-end relation,
concurrent programs are often required to establish ``progress''
properties regarding overall behaviour.
In Section~\ref{s:temporal-logic} temporal logic is recast into our refinement framework,
and then used in Section~\ref{s:nonblocking}
to express the well-known ``lock-free'' property,
and two related concepts.

\subsection{Temporal logic}\label{s:temporal-logic}

Linear temporal logic (LTL) \cite{Pnueli77} describes properties of infinite traces.  
It is straightforward
to encode a temporal logic formula $f$ as a command $\encode{f}$ 
such that
a trace $tr$ satisfies $f$ if and only if $tr \in \encode{f}$
and hence
a command $c$ satisfies $f$ if and only if $\encode{f} \refsto c$.
As well as standard LTL formulae,
one can include formulae representing a single program or environment step
satisfying a relation $r$, 
for which the notations $\TLPstepr$ and $\TLEstepr$, respectively, are used.
This makes it possible to devise temporal logic formulae that relate program and environment behaviour.
In the following, 
$p$ is a single-state predicate, 
$r$ is a binary relation on states,
$h$ is a temporal logic formula,
and
$x$ is a bound variable from a recursion.
\begin{displaymath}
 \begin{array}[t]{cc}
  \mbox{LTL formula $f$} & \encode{f}
  \\
  p & \cgdp \Seq \Abort 
  \\
  \TLPstepr & \cpstepr \Seq \Abort 
  \\
  \TLEstepr & \cestepr \Seq \Abort 
  \\
  h_1 \land h_2 & \encode{h_1} \sqcup \encode{h_2}
  \\
  h_1 \lor h_2 & \encode{h_1} \sqcap \encode{h_2}
 \end{array}
 \hspace*{4em}
 \begin{array}[t]{cc}
  \mbox{LTL formula $f$} & \encode{f} 
  \\
  \cnext h & (\cpstepd \nondet \cestepbotd) \Seq \encode{h}
  \\
  \eventually h & \Fin{(\cpstepd \nondet \cestepbotd)} \Seq \encode{h}
  \\
  \always h & \mu x \spot \encode{h} \sqcup \cnext x
  \\
  x & x
 \end{array}
\end{displaymath}

The encoding of the primitive single state LTL formula $p$ 
requires $p$ to hold in the initial state,
after which any behaviour ($\Abort$) is possible.  
In addition to this familiar LTL formula, in our language we 
allow relations $r$ to be LTL formulae, stating whether the program or
environment must establish $r$.  This distinction is used below to require
that the environment always take some step.
Conjunction and disjunction of formulae straightforwardly use intersection ($\sqcup$)
and union ($\sqcap$) over sets of traces.  The LTL formula for next-step $\cnext
h$ requires exactly one step (program or environment) to occur and then
$h$ to be satisfied.  The LTL formula for eventually-holds $\eventually h$
requires $h$ to hold after a finite number of steps.  The encoding for
always-holds ($\always h$) uses a typical recursive encoding where $h$ is
required to hold in every step.

These definitions highlight the expressive power of a language based on
primitives, and are exploited below to describe some common progress
properties of concurrent programs.
Further ways to develop a temporal logic for our wide spectrum language,
including
interval temporal logic, are outside the scope of this paper.

\subsection{Non-blocking operations}\label{s:nonblocking}

\newcommand{\ofop}{op_{of}}
\newcommand{\lfop}{op_{lf}}
\newcommand{\wfop}{op_{wf}}

As it is usually conceived, \emph{lock-freedom} allows potential non-termination of an
operation provided the system as a whole continues to make
progress.  That is, the non-terminated operation does not ``lock'' the system, and
an infinite number of operations (executed by other parallel processes) may still complete.
For a data structure $s$, for which the shared data is modified at most
once per operation (successful or otherwise), system-wide progress can
be detected by the data structure changing; more specifically
in our context, by a step $\TLEstep{s' \neq s}$.

Related concepts include the stronger \emph{wait-freedom}, that
means every program completes regardless of interference, and the weaker
\emph{obstruction-freedom}, that states that programs need make progress only
in the absence of other programs \cite{obstruction-free}.  
Given an atomic operation $op$ that operates on
data structure $s$, 
we can specify these
properties straightforwardly using distinguished environment steps and
LTL.
\begin{eqnarray}
	\ofop
	&\sdef&
	op
	\choice
	(\pguard{\id} \together
		\calways \ceventually 
			\TLEstepd % {\universalrel}
	)
	\\
	\lfop
	&\sdef&
	op
	\choice
	(\pguard{\id} \together
		\calways \ceventually \TLEstep{s' \neq s}
	)
	\\
	\wfop
	&\sdef&
	op
\end{eqnarray}
An obstruction-free % linearisable
operation is weaker than a lock-free operation: in addition to
successful completion of the operation, an implementation of $\ofop$
may 
fail to terminate if the environment continues to take steps 
(always eventually the trace contains an environment step).  
The restriction $\pguard{\id}$ ensures that the implementation may not change the state
in this case. 
The lock-free specification $\lfop$ is similar, but an implementation
may fail to terminate only if
the environment continually modifies $s$, indicating system-wide
progress.
A valid implementation of $\lfop$ therefore includes
a potentially infinite loop that repeats whenever interference is
detected, and otherwise completes the operation.
Wait-freedom requires termination regardless of the environment; a % linearisable 
wait-free operation is the default case.
This classification makes it clear that $\ofop \refsto \lfop \refsto
\wfop$, as expected.
If the environment performs no steps, $\ofop$ implements $op$
and 
if the environment does not change $s$, $\lfop$ implements $op$.
\begin{eqnarray*}
  op & \refsto & \ofop \together \eguard{\emptyrel} \\
  op & \refsto & \lfop \together \eguard{\id(s)}
\end{eqnarray*}

This classification is simpler than many in the literature 
(though less general than Dongol~\cite{DongolFormalisingNonBlocking}).
We argue that this is achieved by using Aczel traces to specify the
environment, and applying LTL where appropriate.  Note that
properties
of guarantees and LTL developed in different contexts may be reused to
tackle nonblocking programs.

\section{Conclusions}
\label{s:conclusions}

Our goal in designing the semantic model presented in this paper is to provide 
an approach that supports a concurrent wide-spectrum language
that includes both programming language constructs and specification constructs,
including constructs that support rely and guarantee conditions.
A novelty of our approach is that we make use of a set of primitive commands,
including $\cpstepr$, $\cestepr$ and $\cgdp$,
that are quite close to the atomic steps of the underlying semantics
and provide an expressive framework for defining diverse concepts from concurrency theory.  
More complex commands are defined using these primitive commands and the primitive operators.
The use of such basic primitives allows one to make fine distinctions in the semantics,
e.g. we considered three ``no-op'' commands:
$\Nil$, 
$\Skip$, 
and
$\Idle$.

The semantics of the primitives is given in terms of Aczel traces \cite{Aczel83},
similar to the approach of De Roever \cite{DeRoever01}
but extended to support aborting and infeasible specification constructs.
Alternative approaches to specifying the core language are possible,
for example using an operational semantics 
that distinguishes program and environment steps.
As long as the properties of the primitives are maintained,
the definition of the wide-spectrum language constructs in terms of the primitives
requires no change.

Brookes gives a semantics for a parallel language in terms of \emph{action traces}~\cite{BrookesSepLogic07}.
He gives a set of ten primitive \emph{actions} that form the atomic units of a program's execution.
Because he supports concurrent separation logic, 
his model of the state includes a store, a heap and a finite set of resources.
Given such a model of the state, 
his primitive actions can all be defined in terms of a command of the form $\cpstepr$,
in which the relation $r$ determines the effect of the primitive.
More generally, the primitives could be defined as a command of the form $\optr{r}$,
which allows the step to be elided for initial states in which the command has no effect on the state.

In \cite{HayesJonesColvin14TR,FACJexSEFM-14} a
refinement calculus for developing programs in the rely-guarantee style
is presented, using an underlying semantics similar to (but less general
than) that presented here.  Many of the laws developed in that paper for
dealing with assumptions about initial states and the environment may be
carried over to reasoning about those aspects of atomic
specifications, which is an advantage of a unified
concurrency framework.

\subparagraph*{Acknowledgements.}

This research was supported by
ARC grant
DP130102901.
This paper has benefited from feedback from
Julian Fell,
Cliff Jones,
Graeme Smith,
Kim Solin
and
Kirsten Winter.

\bibliographystyle{plain}
\bibliography{parallel}

\end{document}